\DeclareMathOperator{\E}{\mathbb{E}}
\DeclareMathOperator{\Var}{Var}
\DeclareMathOperator{\Poisson}{Poisson}
\DeclareMathOperator{\Geo}{Geo}
\DeclareMathOperator{\Exp}{Exp}
\newcommand{\eqD}{\stackrel{\mathcal{D}}{=}} 
\newcommand{\vect}{\boldsymbol} 
\newcommand{\transpose}{^{\top}}
\newcommand\etal{\emph{et~al.}\,}
\newtheorem{theorem}{Theorem}[section]
\newtheorem{lemma}[theorem]{Lemma}
\newenvironment{definition}[1][Definition]{\begin{trivlist}
\item[\hskip \labelsep {\bfseries #1}]}{\end{trivlist}}
\title[Forward Reachable Sets]{Forward Reachable Sets: Analytically derived properties of connected components for dynamic networks}
\author[B. Armbruster, L. Wang, M. Morris]
{Benjamin Armbruster\\
Northwestern University\\
\and ~Li Wang, Martina Morris \\
University of Washington}
\date{Draft updated August 26, 2016}
\begin{document}
\maketitle

\begin{abstract}
Formal analysis of the emergent structural properties of dynamic networks is largely uncharted territory. We focus here on the properties of forward reachable sets (FRS) as a function of the underlying degree distribution and edge duration.  FRS are defined as the set of nodes that can be reached from an initial seed via a path of temporally ordered edges; a natural extension of connected component measures to dynamic networks. Working in a stochastic framework, we derive closed-form expressions for the mean and variance of the exponential growth rate of the FRS for temporal networks with both edge and node dynamics. For networks with node dynamics, we calculate thresholds for the growth of the FRS. The effects of finite population size are explored via simulation and approximation. We examine how these properties vary by edge duration and different cross-sectional degree distributions that characterize a range of scientifically interesting normative outcomes (Poisson and Bernoulli). The size of the forward reachable set gives an upper bound for the epidemic size in disease transmission network models, relating this work to epidemic modeling (Ferguson 2000, Eames 2004).
\end{abstract}

\section{Introduction}

There is growing interest in diffusion over dynamic networks (a recent review may be found in \cite{holme_temporal_2012}), but formal analysis of the emergent structural properties of dynamic networks 
is largely uncharted territory. In this paper we focus on deriving properties of
forward reachable sets (FRS), a natural temporal
extension of the static concept of a connected component. The FRS
is defined as the set of all nodes that can be reached from an origin node
 via a forward reachable path over some period of time;
a forward reachable path is a sequence of temporally
ordered edges that connects two nodes within a specified time interval \cite{moody_importance_2002}. 

Our work is motivated by applications in epidemiology, where an infectious disease passes from one person
to another in a population through a sequence of partnerships that
form, have duration, and dissolve.  Partnerships are different than ``contacts'' in this context; they represent repeated contact with the same person over time, and this temporal clustering of contacts can affect transmission dynamics.  The duration of partnerships influences the stability of the underlying transmission network structure, and the cross-sectional degree distribution (the number of partners at one instant) influences how connectivity emerges over time.  Long-term monogamous partnerships can isolate dyads, reducing epidemic potential, but if partnerships turn over rapidly or people have more than one partner at a time, epidemic potential increases.  The impact of these pair formation features on epidemic outcomes has been recognized in the sexually transmitted infection (STI) literature for decades \cite{dietz_epidemiological_1988,morris_concurrent_1997,kretzschmar_effect_1998, bauch_moment_2000,ferguson_more_2000,leung_si_2014}. 

The FRS can be thought of as the upper bound for the transmission process on a partnership network: an idealized infectious disease for which the probability of transmission is 1, and transmission is instantaneous.   
We model the FRS as an emergent structural property of 
the dynamic network, generated by micro-level processes
that govern node and link dynamics:
the mean cross-sectional degree $k$, the edge dissolution rate $\alpha$, and the node exit rate $\mu$.  Conditional on these parameters, we vary the cross-sectional degree distribution, comparing Bernoulli to Poisson.   These two distributions are chosen because they map to the normative continuum that guides partnership dynamics.  The Bernoulli, at one end, represents serial monogamy, where only one partner at a time is allowed so each partnership is entirely dependent on the presence of another, while the Poisson at the other end assumes all partnerships are formed and dissolved independently, so there is no restriction on the number of partners one can have at the same time.  Populations with the same mean number of partners per person can vary along the continuum defined by these two distributions. Populations with a highly skewed partnership distribution, such as the power-law distribution, are not represented in our analysis. However, long-tailed distributions are usually found for cumulative degree (e.g., number of partners in the last year) \cite{hamilton2008degree}, rather than cross-sectional degree. We will discuss the implications of long-tailed distributions in Section \ref{sec:aggdeg} and in the discussions (Section \ref{sec:degdist}).

Our analysis focuses on the effect of the degree distribution
on the properties of the FRS, conditional on mean degree, edge duration, and node exit rates.  Working in a stochastic framework, we find analytic solutions
for the threshold for growth of the FRS and the mean and variance of the active FRS size over time. Our analysis considers three phases in the evolution
of the FRS: the initial exponential growth phase where we can make
the approximation of an infinite size network, the logistic growth
phase where the finite network effect kicks in, and the equilibrium
where the active FRS is either extinct or varies around a non-zero
size. We evaluate the stochasticity in the FRS by calculating the probability
of extinction, and the variance of the size of the FRS in the initial growth phase.  

Focusing on the FRS allows us to greatly
simplify the pair-formation and moment closure models found in the epidemics literature because we do not track the neighbors' status in our equations.  As a result, we are able to obtain closed-form analytic solutions to several traditional epidemic potential indicators, as a function of basic parameters that can be obtained from egocentrically sampled networks in standard sample surveys.  The methods we develop here can therefore be used to empirically investigate epidemic potential with easily collected network data. 

\section{Definitions and Setup}

\subsection{Precise definitions and assumptions}
\label{sec:precise}

A \textit{dynamic network} can be represented as a list of edges with activities
over intervals (Holme 2012): $\{(n_{1},n_{1}^{\prime},t_{1},t_{1}^{\prime}),\dots,(n_{i},n_{i}^{\prime},t_{i},t_{i}^{\prime}),\ldots\}$,
where each 4-tuple contains the two endpoints of the edge ($n_{i},n_{i}^{\prime}$),
the edge formation time $t_{i}$, and the dissolution time $t_{i}^{\prime}$.
The active intervals ($t_{i},t_{i}^{\prime}$) of an edge must have positive
length ($t_{i}^{\prime}-t_{i}\geq0$), and may not overlap another
interval on the same edge. The networks in our analysis are undirected. 

We also model node entry and exit (analogous to birth and death in
a dynamic population model). Once a node exits the network,
all of its edges dissolve and it can no longer form edges; the exit
is permanent. Each node has only one active lifetime, but each edge may be associated
with multiple active intervals. The two endpoints of an active edge must both be active nodes.

The network is assumed to be homogeneous and in equilibrium. 
All nodes have the same rate of exit $\mu$, regardless of their degree. 
All nodes $i$ with degree $D_i$ have the same edge formation rate $\lambda(D_i)$.
All active edges have the same rate of dissolution $\alpha$. All rates are constant across time. All dynamics are assumed to follow a memoryless stochastic process with the specified rates, which determine how quickly the network is changing. The memoryless process implies that partnership durations are exponentially distributed.

Note that an edge dissolution may be caused by the death of one of the two endpoints, or by a separation process at rate $\sigma$. We can write the dissolution rate as $\alpha=\sigma+2\mu$ and the separation rate as $\sigma=\alpha-2\mu$. Since the separation rate must be positive, there is a constraint that $\alpha/\mu \geq 2$.

We examine dynamic networks with two different cross-sectional degree distributions: Poisson(k) and Bernoulli(p). 
The Poisson degree distribution corresponds to a uniform edge formation process, where $\lambda(D_i)=\lambda_P$ for any node degree, resulting in an Erd\H{o}s-R\'enyi random network at equilibrium when the number of nodes is infinite.  This type of graph is also referred to as a ``simple'' or Bernoulli random graph in different literatures.
The Bernoulli degree distribution restricts edge formation to nodes that have degree 0, meaning 
$\lambda(D_i)=\lambda_B$ if $D_i=0$ and $\lambda(D_i) = 0$ for $D_i\geq 1$. It corresponds to a strictly monogamous social network constraint. Let $k$ be the mean degree of a cross-sectional network taken at equilibrium. Equivalently, $k$ can be the mean instantaneous degree of a node at equilibrium.

\begin{figure}[ht]
    \centering
    \includegraphics[width=12cm]{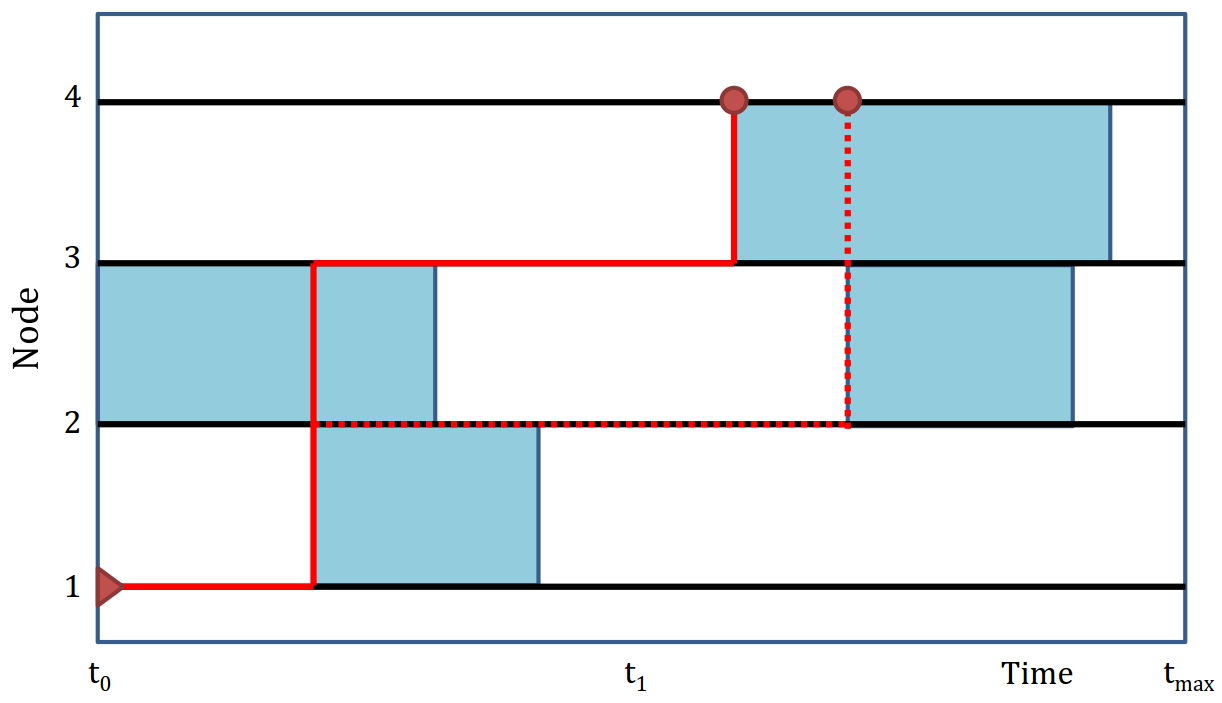}
    \caption{Forward reachable path: the black horizontal lines represent nodes, and the blue shaded regions represent active edges. The red line is a forward reachable path from node 1 to 4, within $[t_0,t_{max}]$. There is a path even though the cross sectional network at $t_1$ has no active connecting edges. The dotted path also connects node 1 to 4, but takes a longer time.
}
    \label{fig:path}
\end{figure}

Given the set of node and edge activities, we can define a forward reachable
path as a time-ordered sequence of active edges between an initial and a destination
node, within a time interval. An equivalent definition using cross-sectional networks
can be found in Nicosia \etal \shortcite{nicosia_components_2012}. Our definition emphasizes
the endpoints of the activity intervals, which are the events that form the
basis of our analysis.

\begin{definition}
There exists a \textit{forward reachable path} from node $i$ to
itself in any time interval (reflexivity). There exists a \textit{forward reachable path}
from $i$ to $k$ in the time interval $(t_{0},t_{max})$
if there exists a node $j$ such that the following two conditions hold:
\begin{itemize}
\item There is a path from $i$ to $j$ in the time interval $(t_{0},t_{j})$, with $t_j\leq t_{max}$.
\item There is an active edge $(j,k,t_{k},t_{k}^{\prime})$ where
$t_{k}<t_{max}$ and $t_{k}^{\prime}\geq t_{j}$.
\end{itemize}
\end{definition}

In our definition for the forward reachable path, there is no delay
between lateral moves in the path, going from one node to another.
In applications like disease transmission or airline connections,
an incubation or transit period $d$ can be specified by requiring
that $t_{k}^{\prime}\geq t_{j}+d$.

Unlike connectivity on a static network, the forward reachable path
is always directed, even if the network is undirected. Connectivity
via forward reachable path is not symmetric: existence of a path from node
$i$ to $k$ does not imply a path from $k$ to $i$.
However, it is transitive if we account for the time coordinate: if
there is a path from $i$ to $j$ in the time interval $(t_{i},t_{j})$
and a path from $j$ to $k$ in the time interval $(t_{j},t_{k})$,
then there is a path from $i$ to $k$ in the interval $(t_{i},t_{k})$. 

\begin{definition}
The \textit{forward reachable set (FRS)}  given $(n_{0},t_{0},t_{max})$ is the set
of all nodes $i$ where a forward reachable path exists between
$n_{0}$ and $i$ in the time interval $(t_{0},t_{max})$. The \textit{active forward reachable set}
$A(n_{0},t_{0},t_{max})$ is the subset of nodes in the FRS that are still active at $t_{max}$.
\end{definition}

One immediate consequence of the reflexivity in forward reachable path
is that if there exists a path between $n_{0}$ and $i$ in the
interval $(t_{0},t_{i})$, then there exists a path between those
two nodes for all intervals $(t_{0},t)$ with $t\geq t_{i}$. So, the forward reachable
set (FRS) is non decreasing in time. However, the active FRS can get smaller
when nodes exit the network, and a node that has exited the network will not be able to form any more
edges, so it will not participate in the growth of the FRS. When all nodes in the active FRS have exited, the active FRS becomes extinct, and its size will stay at zero. 

The dynamics of the network (formation, dissolution, and death) is not influenced by 
whether a node is in the FRS, but the FRS is completely determined by the dynamic network. So,
the FRS can be viewed as a structural property of the dynamic network, analogous
to component size in a static network.

\subsection{Goals for analysis}

Our goal is to model the size of the forward reachable 
set over time on a random dynamic network, as a function of the network
parameters: mean cross-sectional degree $k$, cross-sectional degree
distribution (Bernoulli vs Poisson), edge dissolution rate $\alpha$,
and node exit rate $\mu$ (Table~\ref{table:notation}). We want to define appropriate approximations
and find analytic solutions for the mean and variance of the FRS size,
and compare the results to simulation results.

The networks in the model are finite size, undirected, with population dynamics (node entry and exit). 
The number of active nodes and edges of the network are assumed to be in equilibrium so that there is a balance between the node entry and exit rates, and between
the edge formation rate $\lambda$ and the edge dissolution rate $\alpha$.
This rate balancing will be the foundation of our analysis (Section~\ref{sec:prelim}).

The outcome variables of interest are the expected size and variance
of the active part of the FRS $A$, as a function of time $t_{max}$. 
If there is no node exit $\mu$,
then $A$ is equivalent to the FRS size. The variation is considered
over all initial nodes in the network at $t_{0}$.
There are three
phases in the evolution of the FRS: the initial exponential growth
phase where we can make the approximation of an infinite size network,
the logistic growth phase where the finite network effects kick in,
and the equilibrium where the active FRS either becomes extinct 
or varies in size around a non-zero value.

We split our analysis into three sections.
For the initial exponential growth, we make the approximation that 
a randomly selected node is already in the FRS with probability 0; this is equivalent
to assuming an infinite network size. Given this assumption we can
derive exact solutions for the expectation and variance of the FRS size
(section~\ref{sec:exponential}). For the logistic growth phase, we modify our equations
to include the finite network size effect, using a mean-field approximation
(section~\ref{sec:logistic}). For the equilibrium persistence, we use the
mean-field approximation to derive the expected fraction of the network
in the active FRS ($A$) (section~\ref{sec:equilibrium}). We use simulations to show the
probability of extinction in this stochastic process, and
derive this probability analytically. Our main results will compare the behavior
of the active FRS between Bernoulli and Poisson degree distributions,
for various values of mean degree, dissolution rate, and node exit rate.
This comparison has strong implications for the role of concurrent
partnerships in the spread of a sexually transmitted disease in a
sexual network (section~\ref{sec:discussion}).

\section{Preliminary Analysis}
\label{sec:prelim}

\begin{table}[ht]
\caption{Parameters and notation}
\begin{tabular}{cl}
\hline 
~Parameter~ & Definition\tabularnewline
\hline 
$k$ & Mean instantaneous degree\tabularnewline
$\alpha$ & Edge dissolution rate, per edge\tabularnewline
$\mu$ & Node exit rate, per active node\tabularnewline
$\sigma$ & Edge separation rate, per edge (derived)\tabularnewline
$\lambda$ & Edge formation rate, per eligible node (derived)\tabularnewline
$g$ & FRS growth rate (derived)\tabularnewline
\hline 
\end{tabular}
\label{table:notation}
\end{table}

In this section we discuss the preliminaries for the derivation of the equations for the FRS growth in Section~\ref{sec:exponential}.  These include calculating the size of the connected components; relating the rate of partnership formation to the network parameters; and deriving how the cumulative or ``aggregate'' degree of a node grows over time. A more rigorous derivation of these properties can be found in \cite{leung_dynamic_2012}.

Let $C$ be a random variable denoting the size of the connected component attached to a randomly chosen node, on a cross-sectional (static) network.  For a Bernoulli degree distribution, clearly $\E[C]=1+k$.  For Poisson \shortcite{newman_component_2007}, when $k<1$
\begin{align}
\E[C] &= (1-k)^{-1}\\
\E[C^2] &= (1-k)^{-3}\\
\Var[C] &= k (1-k)^{-3}
\end{align}

We do not consider the $k\geq 1$ case since it would lead to a giant connected component, with $\E[C]=\infty$. These results are valid on a homogeneous network of infinite size. For finite or non-homogeneous networks, the component size can be obtained via simulation \cite{morris_microsimulation_2000}.

The total dissolution rate $\alpha$ is a combination of partnership separation $\sigma$ and death of either partner, so $\alpha=\sigma+2\mu$, and the expected partnership duration is $1/\alpha$. The separation rate is $\sigma=\alpha-2\mu$.

Next we derive the partnership formation rates per eligible node, $\lambda_P$ and $\lambda_B$ for the Poisson and Bernoulli cases respectively, given the average degree $k$, edge dissolution rate $\alpha$, and node exit rate $\mu$. We focus on the degree $D(t)$ over time of a random active node, and model $D(t)$ as a Markov birth-death process on $\{0,1,\ldots\}$ in the Poisson case and on $\{0,1\}$ in the Bernoulli case. Since the degree $D(t)$ is only relevant when the node is alive, we are implicitly conditioning on that, so the death rate of the birth-death process (not the death rate of the node) is $(\sigma+\mu)D(t)$: an active node may lose a partner due to separation at rate $\sigma$, or partner death at rate $\mu$.

In steady-state, $\E[D(t)]=k$, balancing the birth and death rates of the birth-death process. For the Poisson case where all nodes are eligible for formation, this balance is $\lambda_P=(\sigma+\mu) k$. For the Bernoulli case, only the degree-0 nodes are eligible to form an edge, and only degree-1 nodes can dissolve an edge. Noting that fraction of nodes with degree 1 is $k$, and the fraction of nodes with degree 0 is $1-k$, the equilibrium balance is $(\sigma+\mu)k=\lambda_B (1-k)$ and therefore, $\lambda_B=(\sigma+\mu)k/(1-k)$.  

\subsection{Cumulative degree}\label{sec:aggdeg}
As a final piece of the preliminaries we look at the aggregate or cumulative degree. Often, surveys do not ask ``how many partners do you have \emph{at this moment}?'' but commonly ``how many partners (or new partners) have you had in [time period]?''.  The first is $D_{eq}$, the equilibrium distribution of $D(t)$. Here we examine on the second. Let $D_a(t)$ be the cumulative number of partners of some node from its birth until a duration $t$ later.  In the Poisson case, the node acquires partners independent of its current number of partners, so $D_a(t)-D_a(0)$ is just a Poisson process with rate $\lambda_P$.  The Bernoulli case is slightly trickier since the time between new partners (i.e., when $D_a$ increases) is $\Exp(\lambda_B)+\Exp(\sigma+\mu)$.  Note that the expectation of this time is still $1/\lambda_P$ as in the Poisson case, but the variance of this time is now $\lambda_P^{-2} (1-2k(1-k))$ which is smaller than the variance in the Poisson case, $\lambda_P^{-2}$.  Essentially, $D_a$ is a renewal process, specifically an equilibrium renewal process if $D_a(0)\eqD D_{eq}$.  Thus, 
\begin{equation}
	\E[D_a(t)-D_a(0)]=\lambda_P t
\end{equation}
in both the Poisson and Bernoulli cases and 
\begin{equation}
	\Var[D_a(t)-D_a(0)]=\lambda_P t
\end{equation}
in the Poisson case and asymptotically in the Bernoulli case, 
\begin{equation}
	\Var[D_a(t)-D_a(0)]\sim (1-2k(1-k))\lambda_P t
\end{equation}
using standard renewal results.  Note that both grow linearly with time, the expectations are the same for Poisson and Bernoulli, and the variance is smaller for the Bernoulli case.

For the Poisson case, the number of partners within a time interval would have a Poisson distribution, if the node is actively acquiring partners for the whole interval. If however the node became active during this period, then the number of partners within the time interval would be less. This heterogeneity would lead to a more skewed distribution for cumulative degree in our model.

One related calculation that is simple to perform is the lifetime number of partners, $D_a(L)$.  In the Poisson case, the number has a Geometric distribution: 
\begin{equation}\label{eq:DaLPoisson}
	D_a(L)-D_a(0)\sim\Poisson(\lambda_P L)=\Geo_0(1/(1+\lambda_P/\mu))
\end{equation}
due to standard properties of these distributions. In either the Poisson or Bernoulli case we can use iterated expectations to show that $\E[D_a(L)-D_a(0)]=\E[\lambda_P L]=(\alpha/\mu-1)k$.  Thus, 
\begin{equation}
	\E[D_a(L)]=k\alpha/\mu,
\end{equation}
assuming births have an equilibrium number of partners. Poisson and Bernoulli degree networks have the same expected lifetime number of partners.

\subsection{Simulation setup}
\label{sec:sim}

We will compare our analytic results to simulations on random
dynamic networks, both to verify the analysis and to show
differences between the theory and the actual stochastic
process. The networks we simulated have 500 nodes over a range of
parameters $k$, $\alpha$, and $\mu$. 
The random dynamic networks are simulated as continuous time Markov processes with three types of events,
using the stochastic simulation algorithm \cite{Banks11simulationalgorithms}. 
A formation event will create an active edge between the sender node and a random node
in the set of eligible receiver nodes (all other nodes for Poisson, all nodes with degree zero for Bernoulli). Edge separation is defined as the dissolution of an edge that's not due to the exit of one of the nodes. A separation event deactivates a random active edge in the network.
A node exit event deactivates the node and all the edges incident on it.
After the network simulation is completed, we calculate the active
forward reachable set ($A(t,i)$) for every active node at $t=0$. 
We used the R package ``tsna'' for this calculation \cite{bender-demoll_tsna:_2015}.

In our simulations, a node that exits the network is immediately
replaced by a new node (``reincarnation''). The new node enters
the network with degree 0. This is significant for a social network
interpretation, but it adds an inhomogeneity to the dynamic network. The
new nodes do not have the same degree distribution as the rest of
the network, which can affect the equilibrium size of the forward
reachable set.  However, new nodes do not affect the exponential growth phase: even if
they enter the network with non-zero degree, it is unlikely that they are in contact with
the small initial FRS at birth. In Appendix B (Section~\ref{sec:meanfield_adjustment}), we will modify our model to account for the fact that the newly born nodes have degree 0. 

To match the simulated network to the theoretical parameters, we adjust
the rate of edge formation of each active node $\lambda$ and the separation
rate $\sigma$. To match the mean degree $k$, we set the formation
rate for Poisson degree distribution to $\lambda=k\alpha/2$,
and for Bernoulli to $\lambda=\frac{1}{2}\alpha k/(1-k)$ for ``eligible'' nodes of degree 0. 
The total dissolution rate is split
up into two processes, separation and node exit. The separation rate
is set as $\sigma=\alpha-2\mu$.

\begin{figure}[ht]
    \centering
    \includegraphics[width=\textwidth]{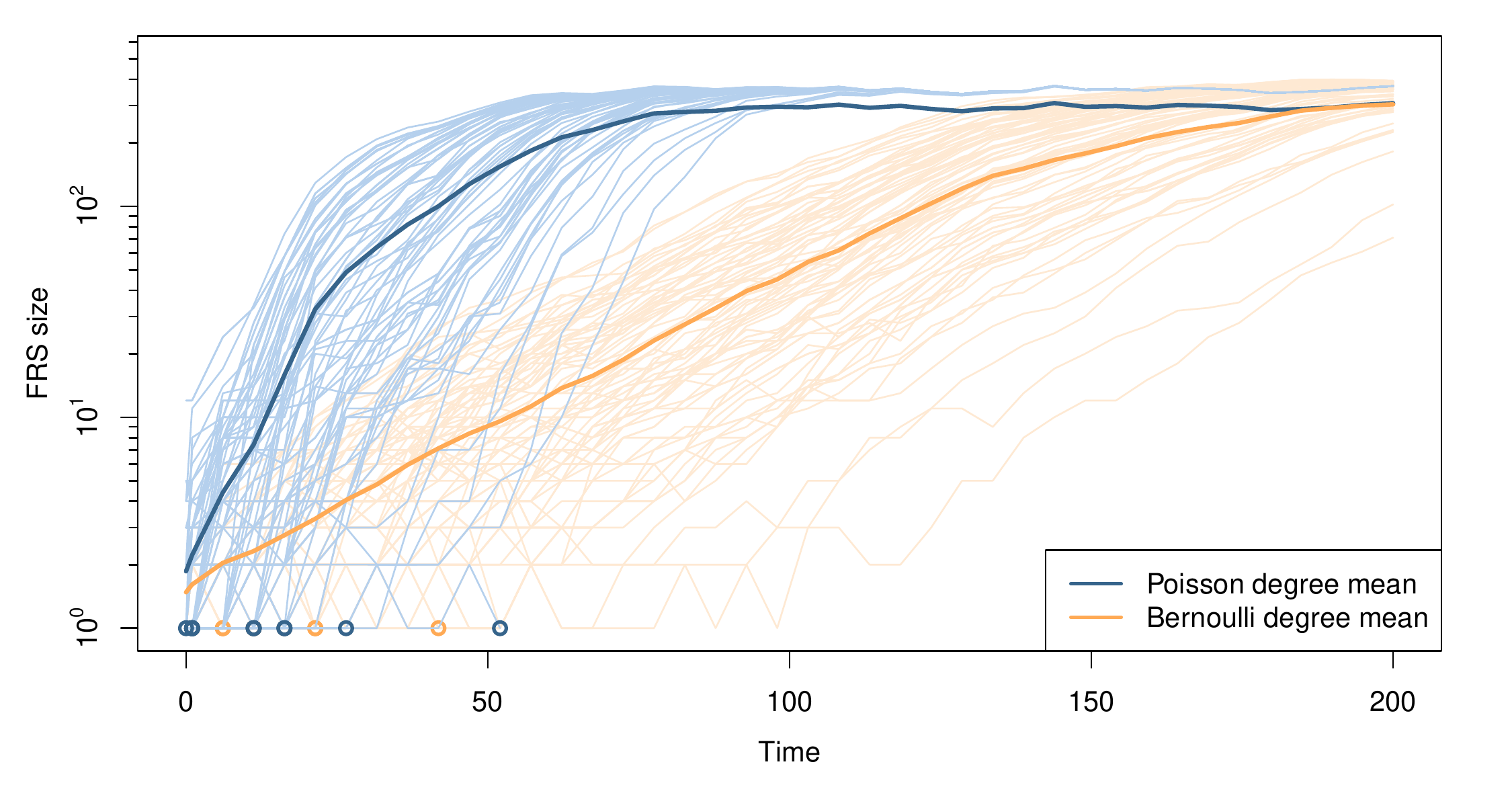}
    \caption{Growth of
the active forward reachable set (FRS) for network parameters $k=0.5$,
$\alpha=1/6$, $\mu=1/40$, and network size 500. The thin blue lines show
the FRS trajectories starting at different initial nodes on a single simulated network
with Poisson degree distribution; the lighter orange lines, for Bernoulli. The
circles along the bottom represent FRS extinctions, and the thick lines are for the corresponding mean FRS sizes.}
    \label{fig:traceplot}
\end{figure}

Figure~\ref{fig:traceplot}  
shows realizations of the growth of the FRS over time,
for different initial nodes on a single dynamic network simulation.
We can separate the growth dynamics into three phases.
When the FRS size is small compared to the network, it exhibits exponential
growth. When the active FRS size is the same order of magnitude as
the network size, the growth curve becomes logistic and eventually hits a plateau. 
Although there is much variation in the time for the FRS size to take-off,
eventually all the realizations starting at different nodes converge to an active FRS of the same size.

\section{Exponential growth phase}
\label{sec:exponential}

For the exponential growth phase, the size of the forward reachable set is small compared to the size of the network. In the epidemiology context, this means the fraction of susceptibles in the population is close to 1 \cite{britton_stochastic_2014}. We will make the approximation that when an edge forms on a node in the FRS, it connects to a node not in the FRS with probability 1.  This approximation is equivalent to having an infinite size network. 

\subsection{General ODE Model}\label{sec:generalODE}
Our general approach models the forward reachable set as a continuous time Markov chain (CTMC).
Consider a vector-valued Markov jump process with state variables $\vect{V}(t)$ and transition rate matrix $Q$. We assume all vectors are column vectors by default.
The Kolmogorov backward equations \cite{karlin_first_2014} state that for any function $b(\vect{V})$,
\begin{equation}
	\E[b(\vect{V}(t))]'=\E[c(\vect{V}(t))],
\end{equation}
where we define 
\begin{align}
	c(\vect{v})&=\sum_{\vect{w}} Q(\vect{v},\vect{w}) b(\vect{w})\\
    &=\sum_{\vect{w}\neq \vect{v}} Q(\vect{v},\vect{w}) (b(\vect{w})-b(\vect{v})).
\end{align}
The last equality is due to the $\sum_{\vect{w}\neq \vect{v}} Q(\vect{v},\vect{w})=-Q(\vect{v},\vect{v})$ property of transition matrices.

The processes of interest to us have special structure.  
They only have a few different types of jumps (i.e., transitions) and the jumps are of the following form.
In state $\vect{v}$ we take jumps of type $i$ at a rate linear in the state, 
$\vect{h}_i\transpose \vect{v}$, where $\vect{h}_i$ is a constant.
Furthermore, a jump of type $i$ transitions the state from $\vect{v}$ to $\vect{v}+\Delta \vect{V}_i$ where $\Delta \vect{V}_i$ may be random but is required to be independent of $\vect{v}$.

With these assumptions and choosing the identity for $a(\vect{v})$, we can write the backward equations as
\begin{equation}\label{eq:mastereq}
\E[\vect{V}]'=\sum_i \E[\Delta \vect{V}_i (\vect{h}_i\transpose \vect{V})]=\vect{H}\E[\vect{V}],
\end{equation}
where matrix 
\begin{equation}
	\vect{H}=\sum_i \E[\Delta \vect{V}_i]\vect{h}_i\transpose.
\end{equation}
In the next sections we specify how our problem can be written in this form
deriving the equations describing how the expected size of the FRS grows over time, first for the Poisson case, then for the Bernoulli case, before finally comparing the results.  Derivations and results for the dynamics of the variance of the size of the FRS are in appendix A.

\subsection{Poisson degree networks}\label{sec:PoissonEqs}
For the Poisson case, the state variables we need are $A$, the size of the active FRS, and $R$, the number of inactive (i.e., dead) nodes in the FRS, so in the vector notation from above,  $\vect{V}=(A,R)$. There are two types of transitions. The first is the formation of an edge where one endpoint is in the FRS (\textit{formation}). This happens at a rate of $\lambda_P A$ and adds a connected component to the active FRS ($A\to A+C$ ). 
Thus using the vector notation, $\vect{h}_1=(\lambda_P,0)$ and $\Delta \vect{V}_1=(C,0)$. The second type of transition is node exit (\textit{death}).
It happens at a rate of $\mu A$, and removes a single node from the FRS ($A\to A-1$ and $R\to R+1$ ).  Using the vector notation, $\vect{h}_2=(\mu,0)$ and $\Delta \vect{V}_2=(-1,1)$.

Thus, equation (\ref{eq:mastereq}) gives us the following differential equations,
\begin{align}
\E[A]' &=(\lambda_P \E[C]-\mu)\E[A]  \label{eq:pois}\\
\E[R]' &=\mu\E[A].
\end{align}
At time 0, $R=0$ and $A$ equals the size of the connected component of the initial node of the FRS.  If the initial node is chosen at random, then $\E[A(0)]=\E[C]$. Node birth events are not counted here because new nodes have degree 0 and do not add to the active FRS.

These differential equations are linear with constant coefficients, and thus have simple exponential solutions:
\begin{align}
\E[A(t)] &=\E[A(0)] e^{g_P t},\\
\E[R(t)] &=(e^{g_P t}-1) \E[A(0)] \mu/g_P,
\end{align}
where the exponential growth rate $g_P=\alpha (k-x)/(1-k)$, with $x=\mu/\alpha$.
This makes sense from a dimensional analysis perspective: since $g$ is a growth rate, it must be the product of a rate parameter such as $\alpha$ and a function of the dimensionless constants, $k$ and $x$.

\subsection{Bernoulli degree networks}\label{sec:BernoulliEqs}
For the Bernoulli case, in addition to $A$ and $R$, we must add a variable $W$ to keep track of the number of degree-0 nodes in the FRS. Only nodes in $W$ are eligible to form active edges. The vector of state variables is now $\vect{V}=(A,W,R)$.  Our Markov process has the following types of transitions:
\begin{itemize}
\item Formation. $A\to A+1$ and $W\to W-1$ at rate $W\lambda_B$.\\
$\Delta\vect{V}_1=(1,-1,0)$, $\vect{h}_1=(0,\lambda_B,0)$.
\item Dissolution. $W\to W+2$ at rate $\sigma(A-W)/2$.  Here $(A-W)/2$ counts the number of partnerships in the FRS.\\
$\Delta\vect{V}_2=(0,2,0)$, $\vect{h}_2=(\sigma/2,-\sigma/2,0)$.
\item Death of a degree-0 node in the FRS. $A\to A-1$, $W\to W-1$, and $R\to R+1$ at rate $\mu W$.\\
$\Delta\vect{V}_3=(-1,-1,1)$, $\vect{h}_3=(0,\mu,0)$.
\item Death of a degree-1 node in the FRS. $A\to A-1$, $W\to W+1$, and $R\to R+1$ at rate $\mu (A-W)$.\\
$\Delta\vect{V}_4=(-1,1,1)$, $\vect{h}_4=(\mu,-\mu,0)$.
\end{itemize}

This leads to the following set of differential equations:
\begin{align}
\E[A]' &= \lambda_B\E[W]-\mu\E[A]\\
\E[W]' &= -(\lambda_B+\mu)\E[W]+(\sigma+\mu)(\E[A]-\E[W])\\
\E[R]' &= \mu\E[A].
\end{align}

As in the Poisson case, the equation for $\E[R]'$ decouples from the rest; none of the other equations rely on $\E[R]$.  Thus we end up with a two dimensional linear system with constant coefficients, where the growth rate of the solution is determined by the largest eigenvalue of the coefficient matrix.  The eigenvalues are $(\alpha-\mu)(\pm\sqrt{1+4k(1-k)}-1)/(2(1-k))-\mu$. Since $1-k>0$ and $\alpha-\mu>0$, the growth rate is:
\begin{align*}
g_{B} &=  (\alpha-\mu)\frac{\sqrt{1+4k(1-k)}-1}{2(1-k)}-\mu\\
 &=  \alpha\left(\frac{\sqrt{1+4k(1-k)}-1}{2(1-k)}(1-x)-x\right)
\end{align*}
where again $x=\mu/\alpha$. Thus $\E[A(t)]=O(e^{g_B t})$ and the same for $\E[R]$.  In fact in both the Poisson and Bernoulli cases, the differential equation for $\E[R]$ is the same and thus, $\E[R(t)]\sim \E[A(t)]\mu/g$ as $t\to\infty$.  This growth rate has simple edge cases: $g_B=-\mu$ for $k=0$ and $g_B=\sigma$ for $k=1$.

\subsection{Key results}
\label{sec:keyresults}
Since $k$ and $x$ are both dimensionless,
dimensional analysis tells us that we can write the growth rate in the form $g=\alpha f(k,x)$.  Equivalently, we can arbitrarily choose $\alpha=1$ and plot $g$ as a function of $k$ for various choices of $\mu$.  Figure \ref{fig:1} shows this.  From this we see that the Poisson growth rates go to infinity as $k\to 1$, because most nodes are connected as part of a giant component in the cross-sectional network. The Bernoulli degree network has a max component size of 2, so its FRS growth is limited by the edge dissolution rate. We can show that the growth rate for Poisson degree distributed network is faster than Bernoulli, $g_P\geq g_B$, given any parameter values. 

\begin{figure}[h]
    \centering
    \includegraphics[width=3.2in]{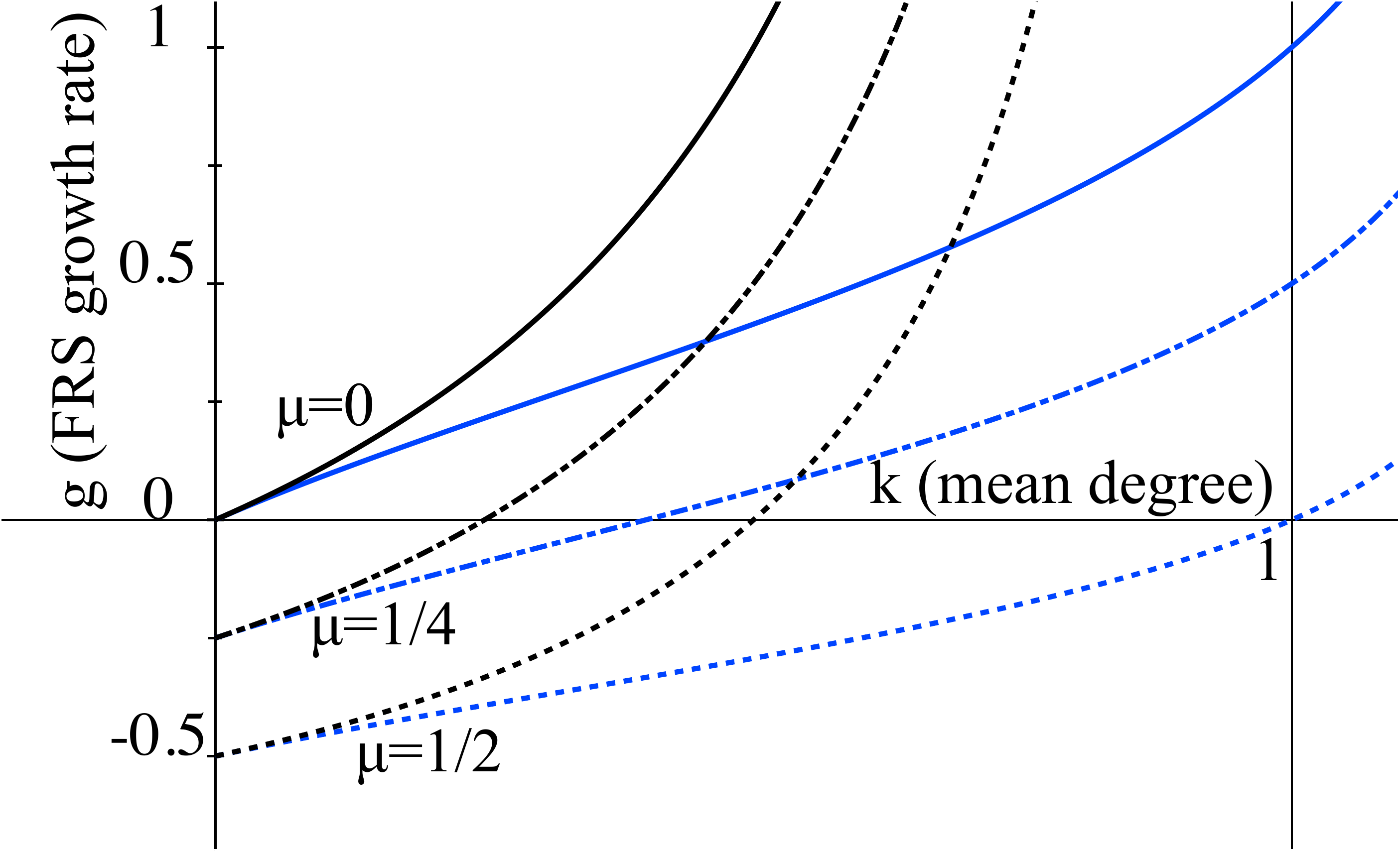}
    \caption{Growth rate, $g$. Black lines are for the Poisson case, blue lines for the Bernoulli case. Here $\alpha=1$ and we show $\mu=0,1/4,1/2$.  For other $\alpha$ values, scale $g$ and $\mu$ accordingly.} 
    \label{fig:1}
\end{figure}

\begin{lemma}\label{lem:gPgB} $g_P\geq g_B$ for all parameter values.
\end{lemma}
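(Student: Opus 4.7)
The plan is to prove the inequality by direct algebraic manipulation of the closed-form expressions already derived for $g_P$ and $g_B$. Since both formulas share the factor $\alpha$ and the denominator $1-k$ (with $k<1$ by the standing assumption, since the Poisson case was only defined for $k<1$ and the Bernoulli degree can be at most 1), and both depend on $x=\mu/\alpha$, the difference $g_P-g_B$ should simplify substantially.

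First I would form $g_P - g_B$ and factor. Writing
\begin{equation*}
\frac{g_P - g_B}{\alpha} = \frac{k-x}{1-k} - \frac{\sqrt{1+4k(1-k)}-1}{2(1-k)}(1-x) + x,
\end{equation*}
I would combine the first and last terms using the identity $(k-x) + x(1-k) = k(1-x)$, which yields
\begin{equation*}
\frac{g_P - g_B}{\alpha} = \frac{(1-x)}{2(1-k)}\Bigl[(2k+1) - \sqrt{1+4k(1-k)}\Bigr].
\end{equation*}
The prefactor is nonnegative: $1-k>0$ by assumption, and $1-x \geq 1/2 > 0$ because the constraint $\alpha/\mu\geq 2$ from Section~\ref{sec:precise} gives $x\leq 1/2$.

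The remaining step is to show $2k+1 \geq \sqrt{1+4k(1-k)}$. Both sides are positive for $k\in[0,1)$, so I would square and compare: $(2k+1)^2 - (1+4k(1-k)) = 8k^2 \geq 0$, which finishes the argument.

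The only obstacle I anticipate is keeping track of signs when combining the fractions and verifying that $1-x\geq 0$ (rather than just $x\geq 0$); the key algebraic cancellation into a clean factor of $k(1-x)$ is the step that makes the proof short, so I would double-check that simplification carefully. No additional stochastic or analytic input is needed beyond the eigenvalue formulas already established in Sections~\ref{sec:PoissonEqs} and~\ref{sec:BernoulliEqs}.
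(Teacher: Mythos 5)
Your proposal is correct and follows essentially the same route as the paper: both reduce $g_P-g_B$ (after clearing the positive factor $\alpha(1-x)/(2(1-k))$) to the inequality $2k+1\geq\sqrt{1+4k(1-k)}$ and verify it by squaring, the difference of squares being $8k^2\geq 0$. Your write-up is just slightly more explicit about the sign of the prefactor $1-x$, which the paper leaves implicit.
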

\begin{proof}
Let $x=\mu/\alpha$.  Then $(g_P-g_B)2(1-k)/(\alpha(1-x))=2k+1-y$ where $y=\sqrt{1+4k(1-k)}$.  Now $(2k+1)^2=4k^2+4k+1\geq -4k^2+4k+1=y^2$, proving the claim.
\end{proof}

In simulations, the exponential growth rate of the expected FRS size matches
up well with the analytic results Figure~\ref{fig:traceplot}. 
The growth rate is higher for larger cross-sectional
mean degree $k$ and larger values of the dimensionless parameter
$\alpha/\mu$, which is proportional to the number of formations per
node lifetime (Section~\ref{sec:prelim}). We can make the comparison between Poisson
and Bernoulli degree distributions for different values of $k$ and
$\alpha/\mu$. Figure~\ref{fig:circleplot} shows that the FRS growth rate
is always larger for a Poisson degree network compared to a Bernoulli
degree network, given the same parameter values. A Poisson degree
network can achieve the same growth rate with a smaller mean degree,
or fewer formations per node lifetime. 

\begin{figure}[ht]
    \centering
    \includegraphics[width=10cm]{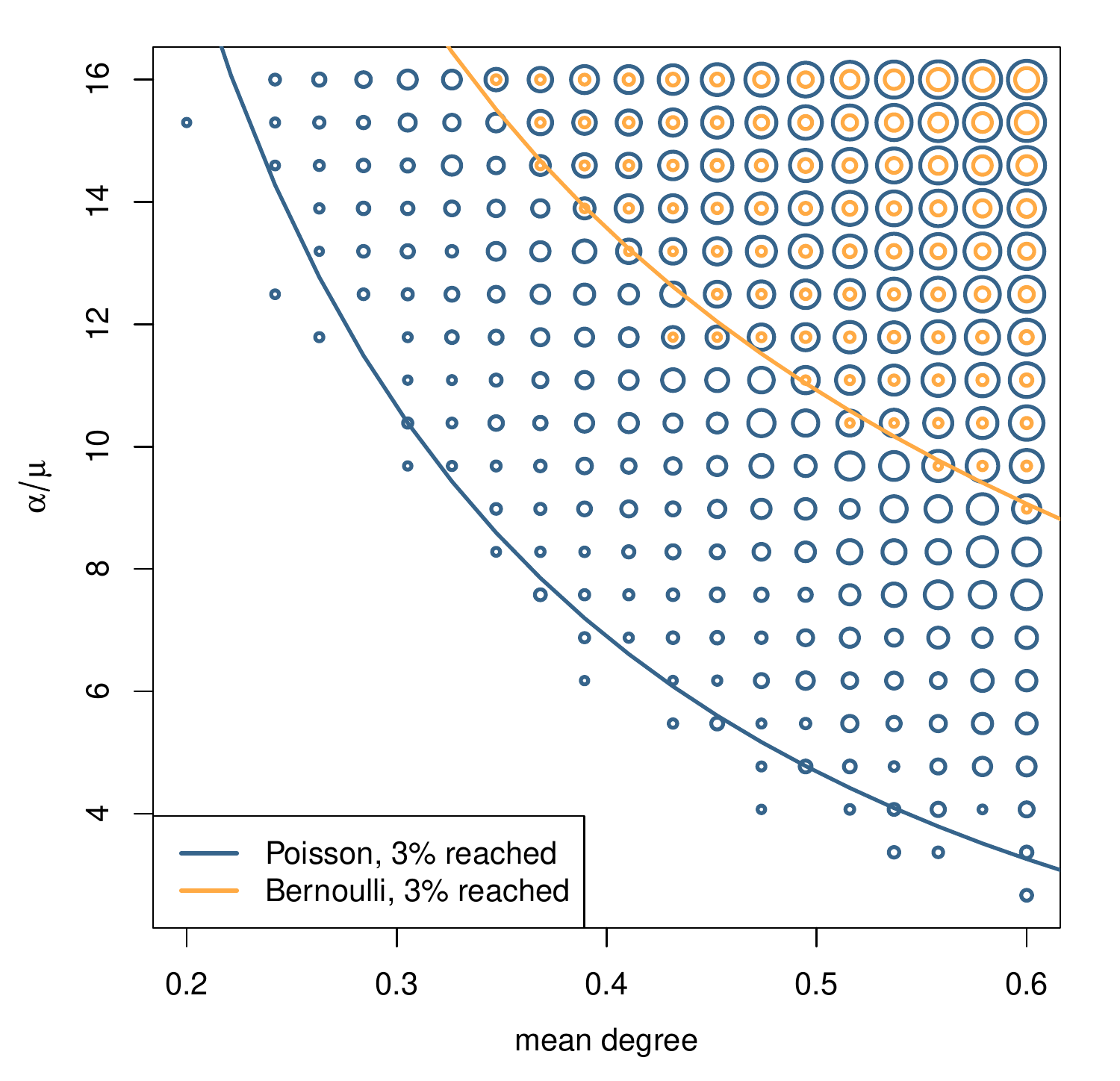}
    \caption{Comparison
of the FRS growth rate across simulated networks with different parameter values. The size
of the circles represent the mean FRS size at $t=30$, which we can
us as a proxy measure of the exponential growth rate. We only show
circles for FRS sizes $>$3\%. The two lines highlight the theoretic 
contours of equal growth rate.}
    \label{fig:circleplot}
\end{figure}

We can define a growth threshold, $g>0$, that separates the parameter values where the FRS grows exponentially from those where it shrinks or doesn’t grow.  From dimensional analysis we can write the threshold as $f(k)>\mu/\alpha$ for some function $f$ or equivalently $R:=f(k)\alpha / \mu> 1$ to write it in a form similar to the standard form for $R_0$.  In the Poisson case, $g_P=\alpha (k-x)/(1-k)$, so $g_P>0$ is equivalent to $k>\mu/\alpha$.  For the Bernoulli case we have $1-2(1-k)/(\sqrt{1+4k(1-k)}+1-2k)>\mu/\alpha$.  The thresholds are similar for small $k$ because for small values of $k$, the left hand side of the Bernoulli threshold is $k+o(k)$, matching the Poisson threshold. In addition Lemma \ref{lem:gPgB} of course implies that the Bernoulli threshold is always greater than the Poisson one (i.e., $g_B>0$ implies $g_P>0$).  Figure \ref{fig:2} plots the two thresholds as functions of $k$ and the dimensionless quantity $\alpha/\mu$.

\begin{figure}[h]
    \centering
    \includegraphics[width=3.2in]{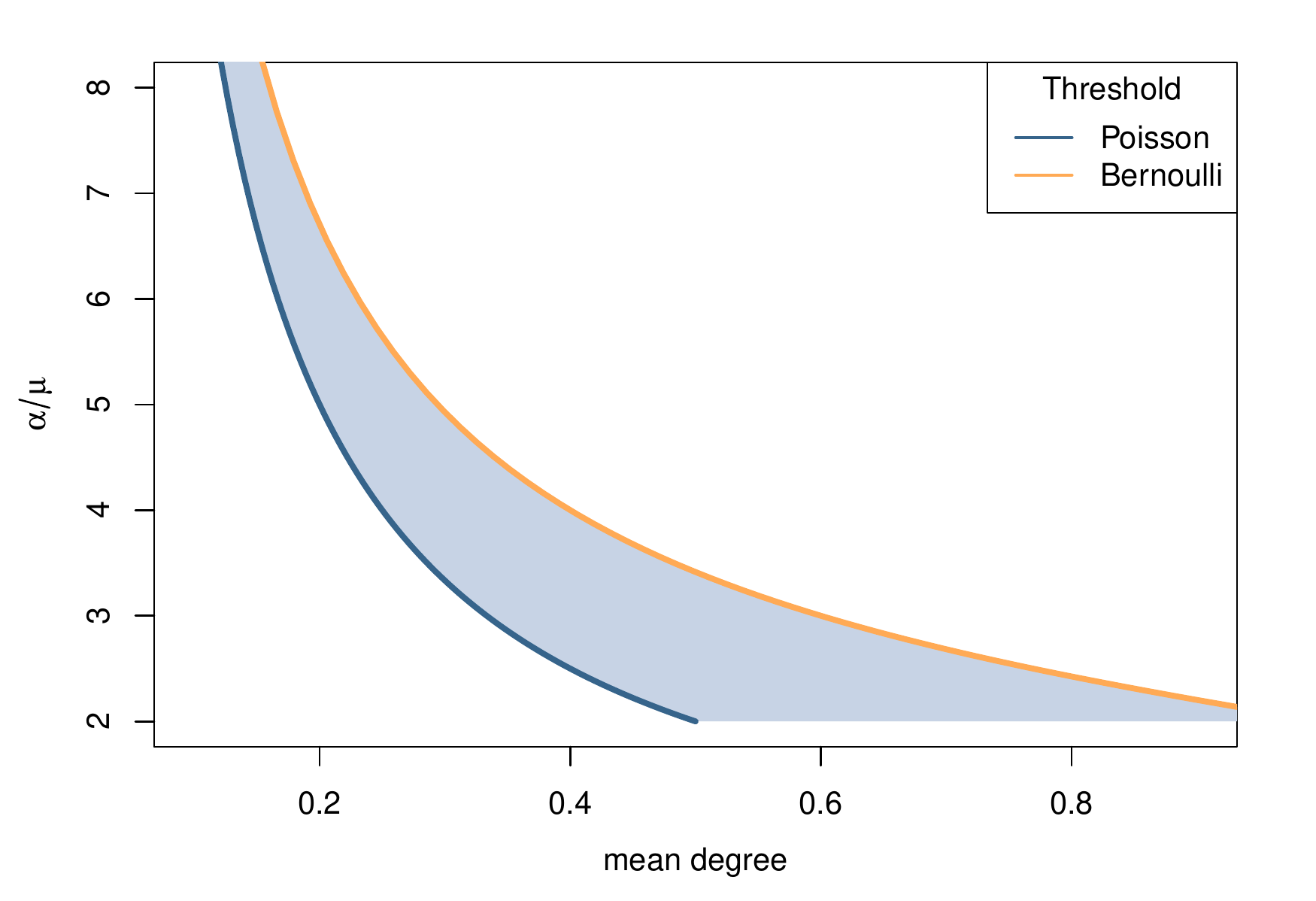}
    \caption{Growth threshold. Above the threshold, growth is positive.  There is a range of parameters (the blue shaded region) with positive growth for Poisson, but not for Bernoulli degree networks. Note $\alpha/\mu<2$ is not possible (Section~\ref{sec:precise}).} 
    \label{fig:2}
\end{figure}

\section{Logistic growth phase}
\label{sec:logistic}

When the size of the forward reachable set grows to a significant fraction (about 10\%) of network size, the assumptions we used for the exponential growth phase are no longer valid. When a new edge is formed, the probability that the node in the FRS connects to a node not already in the FRS is no longer approximately 1; it is reduced to $1-E[A]/n$. This saturation effect is captured by the mean-field models that modify the differential equations from Section~\ref{sec:exponential}.

For Poisson degree networks, the differential equation becomes 
\begin{equation}
\E[A]'=\E[A](1-\E[A]/n)\lambda_P E[C]-\mu \E[A].\label{eq:PmeanfieldA}
\end{equation}

For the Bernoulli case, we get
\begin{align}
\E[A]' &= (1-\E[A]/n)\lambda_B \E[W] - \mu\E[A]\label{eq:BmeanfieldA}\\
\E[W]' &= -(\lambda_B+\mu)\E[W] + (\sigma+\mu)(\E[A]-\E[W]).\label{eq:BmeanfieldW}
\end{align}

We can solve the equations above numerically to approximate
the expected FRS growth in the logistic phase. The growth rate decreases as the FRS gets larger, but it never drops below 0. Figure~\ref{fig:traceplot2} shows how the average FRS size in simulations matches up with the mean-field solutions for a 500 node network, up until the equilibrium.

\begin{figure}[ht]
    \centering
    \includegraphics[width=12cm]{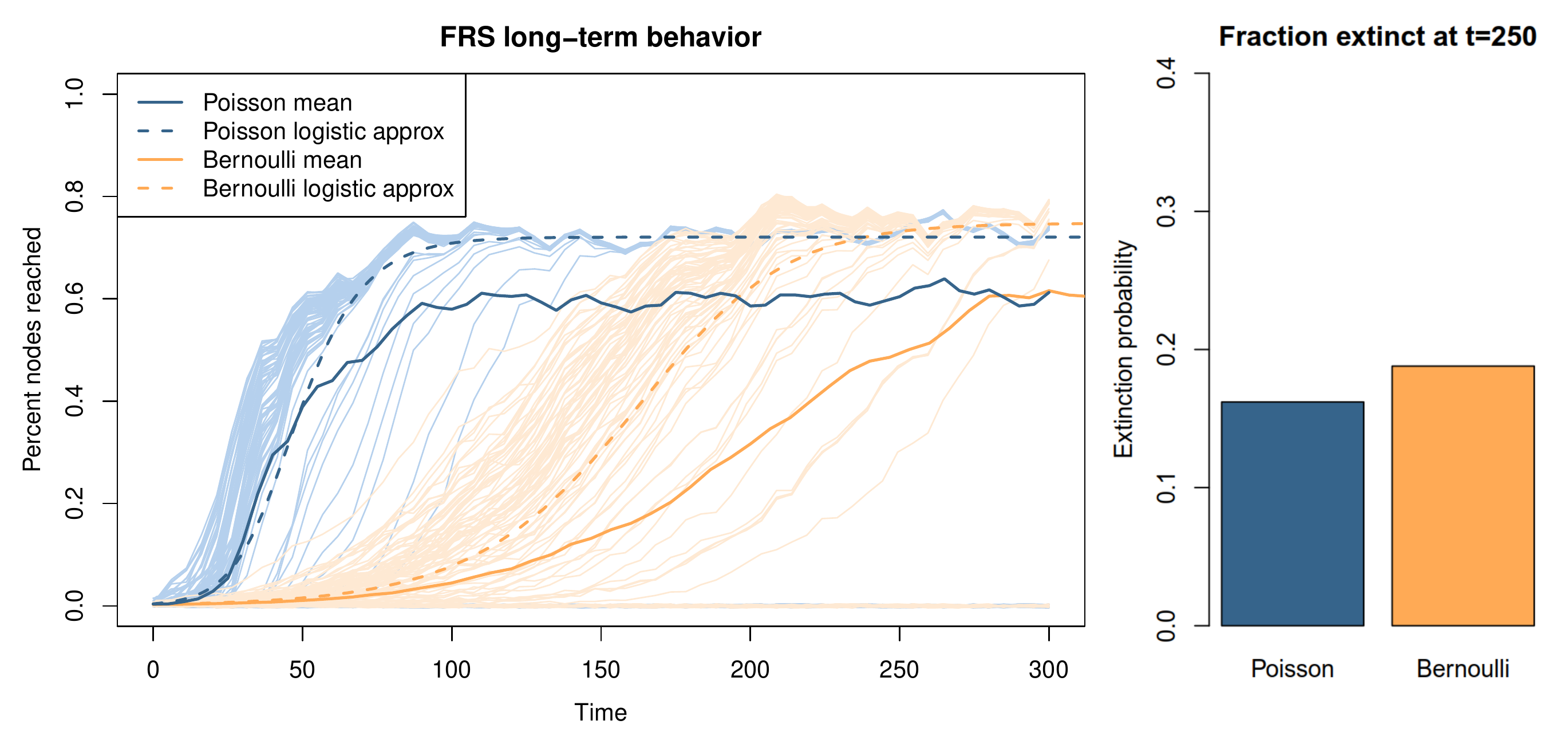}
    \caption{Long-term behavior
of the active forward reachable set for a simulated network with parameters $k=0.5,$
$\alpha=1/6$, $\mu=1/40$, and network size 500 (same as in Figure \ref{fig:traceplot}). The non-logged y-axis
 shows the logistic growth curve; the mean-field logistic approximations match up with the persistent FRS size. 
 The different trajectories show the FRS growth for different starting nodes.
 At equilibrium, the FRS size distribution is bimodal, with the lower mode showing the extinction probability (right panel). All the trajectories that persist eventually converge to an active FRS of the same size.}
    \label{fig:traceplot2}
\end{figure}

Setting the right hand side
to zero allows us to solve for an approximation to the equilibrium expected size.
Figure~\ref{fig:3} shows the dependence of the mean-field equilibrium on the parameters.

\begin{figure}[ht]
    \centering
    \includegraphics[width=3.2in]{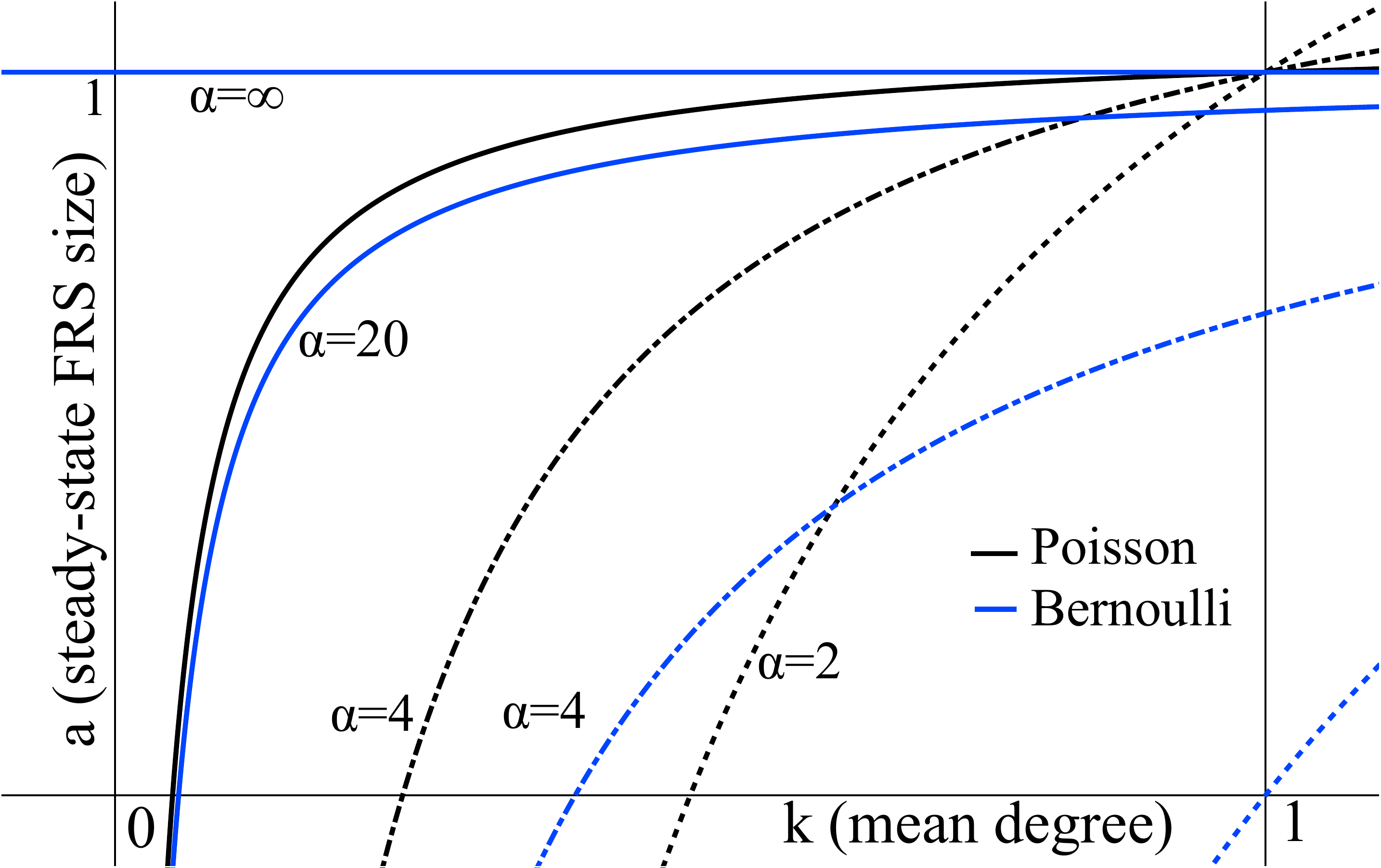}
    \caption{Mean-field equilibrium forward reachable set size for $\mu=1$.  Black lines are for the Poisson case and blue lines for the Bernoulli case.  Lines are shown for $\alpha=2,4,20,\infty$.}
    \label{fig:3}
\end{figure}

\section{Equilibrium phase}
\label{sec:equilibrium}

At equilibrium, the FRS size distribution becomes bimodal (Figure~\ref{fig:traceplot2}).
For some of the initial nodes, the FRS takes off and reaches a persistent
equilibrium value. For other initial nodes, the FRS shortly becomes
extinct, and will remain at zero, even when the network parameters
are above the threshold for growth. Our mean-field model is deterministic
and does not account for extinctions of the FRS, which happens stochastically. 
The mean-field solutions match up with the average
persistent FRS size (for the set of initial nodes that do not have their FRS go extinct).

\begin{figure}[ht]
    \centering
    \includegraphics[width=10cm]{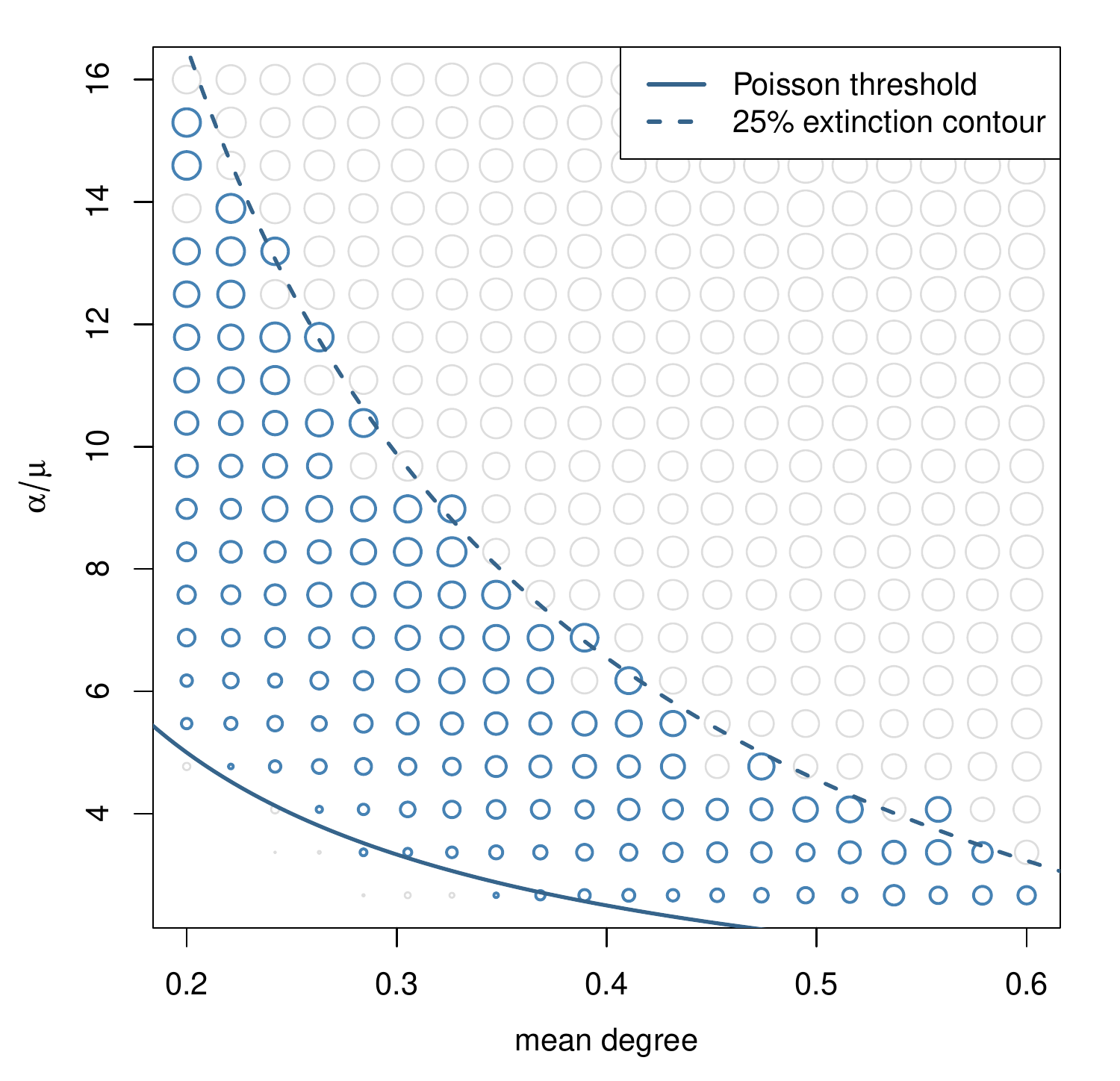}
    \caption{The persistent
FRS size, plotted as circle size, across different network parameter values at equilibrium.
The band of blue circles shows the region where extinction
probability is between 25\% and 75\%. In the lower left corner, the
FRS goes extinct for most initial nodes. In the upper right, the FRS
will grow to the persistent size most of the time. }
    \label{fig:equilplot}
\end{figure}

\subsection{Persistent FRS size}
\label{sec:persistentsize}
The logistic mean-field equations from Section~\ref{sec:logistic} allow us to calculate the equilibrium FRS fraction for the Poisson case, by setting the growth rate to zero:
\begin{equation}
	a_P = \E[A]/n = 1-\mu/(\lambda_P \E[C]) = (k-x)/(k(1-x)),
\end{equation}
where $x=\mu/\alpha$.  Note that the threshold for a positive steady state, $a_P>0$, coincides with the threshold for a positive growth rate in the exponential growth phase, $k>\mu/\alpha$.

For the Bernoulli case, setting the differential equations (\ref{eq:BmeanfieldA})--(\ref{eq:BmeanfieldW}) to 0 gives the equilibrium FRS size:
\begin{equation}
	a_B=\E[A]/n=1-x(1-kx)/(k(1-x)^2).
\end{equation}
Again, the threshold for a positive persistent FRS size at equilibrium coincides with the threshold for positive FRS growth.

\begin{lemma}\label{lem:aBgB} $a_B\leq 0$ iff $g_B\leq 0$.
\end{lemma}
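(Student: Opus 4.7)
The plan is to show that both conditions $a_B\leq 0$ and $g_B\leq 0$ reduce by direct algebra to the same quadratic inequality in the dimensionless parameter $x=\mu/\alpha$, namely
\[
Q(x):=2kx^2-(2k+1)x+k\leq 0.
\]
Once each condition is translated into $Q(x)\leq 0$, the equivalence follows immediately. Throughout I use that $0<k<1$ (required for the Bernoulli formation rate to be defined) and that $x\leq 1/2$ (the constraint $\alpha/\mu\geq 2$ from Section~\ref{sec:precise}).

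For $a_B\leq 0$, I would clear the denominator in $a_B=1-x(1-kx)/(k(1-x)^2)$. Since $k(1-x)^2>0$, the inequality is equivalent to $k(1-x)^2\leq x(1-kx)$. Expanding both sides and collecting terms yields exactly $Q(x)\leq 0$.

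For $g_B\leq 0$, set $y:=1+4k(1-k)$ and rearrange to $(\sqrt{y}-1)(1-x)\leq 2x(1-k)$, i.e., $\sqrt{y}(1-x)\leq 1+(1-2k)x$. Both sides are positive: the left side is clearly positive, and the right side satisfies $1+(1-2k)x\geq 1-x\geq 1/2$ because $x\leq 1/2$. Squaring preserves the inequality, giving $(1+(1-2k)x)^2\geq y(1-x)^2$. The key step is then to verify the identity
\[
(1+(1-2k)x)^2-y(1-x)^2 \;=\; -4(1-k)\,Q(x).
\]
Since $1-k>0$, the sign of the left-hand side is opposite that of $Q(x)$, so $g_B\leq 0$ is equivalent to $Q(x)\leq 0$, matching the condition obtained in the previous paragraph.

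The main obstacle is verifying the displayed identity. I would expand $(1+(1-2k)x)^2$ and $(1+4k(1-k))(1-2x+x^2)$ separately, then collect coefficients of $1$, $x$, and $x^2$. The simplifications that make the common factor $4(1-k)$ visible are $(1-2k)^2-1-4k(1-k)=-8k(1-k)$ for the $x^2$ coefficient, $2(1-2k)+2+8k(1-k)=4(1-k)(1+2k)$ for the $x$ coefficient, and $-4k(1-k)$ for the constant term. Factoring $-4(1-k)$ out of these three coefficients produces exactly $2kx^2-(2k+1)x+k$, completing the identification and thereby the lemma.
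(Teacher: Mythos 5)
Your proof is correct and follows essentially the same route as the paper's: both conditions are reduced to the common quadratic inequality $2kx^2-(2k+1)x+k\leq 0$ (which the paper writes as $2kx(x-1)+k-x\leq 0$), with the $g_B$ condition handled by isolating $\sqrt{1+4k(1-k)}$ and squaring. Your version is slightly more careful in justifying that both sides are positive before squaring, but the substance is identical.
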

\begin{proof} $k(1-x)^2 a_B=2kx(x-1)+k-x$.  Rearranging $g_B$, $\alpha(1-x)\sqrt{1+4k(1+k)}=\alpha(1-(2k-1)x)+2(1-k)g_b$. So $g_B\leq 0$ iff $(1-x)^2(1+4k(1+k))\leq (1-(2k-1)x)^2$.  Expanding, this is equivalent to $4(1-k)(2kx(x-1)+k-x)\leq 0$ proving the claim.
\end{proof}

The persistent FRS size at equilibrium, like the growth rate, is higher
for larger cross-sectional mean degree $k$ and larger values of the
dimensionless parameter $\alpha/\mu$ (the circle sizes in Figure~\ref{fig:equilplot}). 

If we compare the equilibrium behavior for Poisson degree networks
versus Bernoulli in Figure~\ref{fig:3}, there is a region of parameter values between the
respective thresholds where a persistent FRS is possible for Poisson degree networks,
but not for Bernoulli. For a specific value of the mean degree $k$, this region is between
\begin{equation}
	1/k < \alpha / \mu < 4k/\left(1+2k-\sqrt{1+4k(1-k)}\right)
\end{equation}
which coincides with the region in Figure \ref{fig:2} for positive growth rates.

For all parameter values, the equilibrium persistent size should be larger for Poisson degree distributed networks. Similar to Lemma \ref{lem:gPgB} proving that $g_P\geq g_B$, we can prove for the mean-field models that $a_P\geq a_B$.  This is also illustrated in Figure \ref{fig:3} which shows the steady-state fractions, $a_P$ and $a_B$, as functions of $x$ and $k$. However, we do not see this in simulations, because the simulated networks are not actually homogeneous.

\begin{lemma}\label{aPaB} $a_P\geq a_B$.
\end{lemma}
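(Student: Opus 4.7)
The plan is a direct algebraic comparison of the two closed forms. Rewriting $a_P$ as $1 - x(1-k)/(k(1-x))$ and leaving $a_B = 1 - x(1-kx)/(k(1-x)^2)$, the difference telescopes nicely: after putting everything over the common denominator $k(1-x)^2$, one finds
\begin{equation*}
a_P - a_B \;=\; \frac{x\bigl[(1-kx) - (1-k)(1-x)\bigr]}{k(1-x)^2}.
\end{equation*}
Expanding $(1-k)(1-x) = 1 - k - x + kx$ collapses the bracket to $k + x - 2kx$, so the expression simplifies to
\begin{equation*}
a_P - a_B \;=\; \frac{x\bigl(k + x(1-2k)\bigr)}{k(1-x)^2}.
\end{equation*}

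It then remains to verify the right-hand side is non-negative. The denominator is positive since $0 < k < 1$ and $x < 1$, and the outer factor satisfies $x \geq 0$. For the bracket $k + x(1-2k)$ I would split on the sign of $1-2k$: if $k \leq 1/2$ the bracket is at least $k > 0$; if $k > 1/2$, I invoke the standing constraint $\alpha/\mu \geq 2$ from Section~\ref{sec:precise}, which forces $x \leq 1/2$, so that $k + x(1-2k) \geq k + (1-2k)/2 = 1/2 > 0$. Either way the bracket is strictly positive, yielding $a_P \geq a_B$ as claimed.

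The main obstacle---really the only non-routine step---is remembering to use the model constraint $\alpha/\mu \geq 2$ in the $k > 1/2$ regime; without it, the sign flip in $1-2k$ would leave the argument incomplete. Everything else is bookkeeping: choosing to write both $a_P$ and $a_B$ in ``$1-(\cdot)$'' form so the subtraction produces a single clean numerator, and then factoring out $x$. The structure parallels the proof of Lemma~\ref{lem:gPgB}, which also reduces the comparison to a nonnegativity check on a simple polynomial expression.
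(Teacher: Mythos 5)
Your proof is correct and takes essentially the same route as the paper's: both reduce $a_P - a_B$ to the sign of $x(k + x - 2kx)$ over the positive denominator $k(1-x)^2$ and then invoke the constraint $x \leq 1/2$. The only difference is cosmetic --- the paper avoids your case split on $k$ by grouping the bracket as $k(1-2x) + x$, which is nonnegative directly from $x \leq 1/2$ regardless of whether $k$ exceeds $1/2$.
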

\begin{proof}
Note that $k(1-x)^2=a_P=x(x-k)+k-x$ and $k(1-x)^2a_B=x(k(2x-1)-k)+k-x$.  Since $x\leq 1/2$, $2x-1\leq 0$, proving the claim.
\end{proof}

The above result assumes that nodes entering the network have the same degree distribution as the rest of the network. In our simulations, nodes enter the network with degree-0, which causes
the network to have a non-homogeneous degree distribution. The degree-0 entry has important interpretations for social networks, and is common for network models with population dynamics \cite{leung_dynamic_2012}. We show that there is a significant effect on the equilibrium size of the FRS. The equilibrium FRS size
increases for Bernoulli networks, and decreases for Poisson networks,
switching their relative positions from our homogeneous model (Figure \ref{fig:traceplot2} and \ref{fig:traceplotadjust}). We make adjustments to our model and derive the FRS outcome in Appendix \ref{sec:meanfield_adjustment}.

\begin{figure}[ht]
    \centering
    \includegraphics[width=\textwidth]{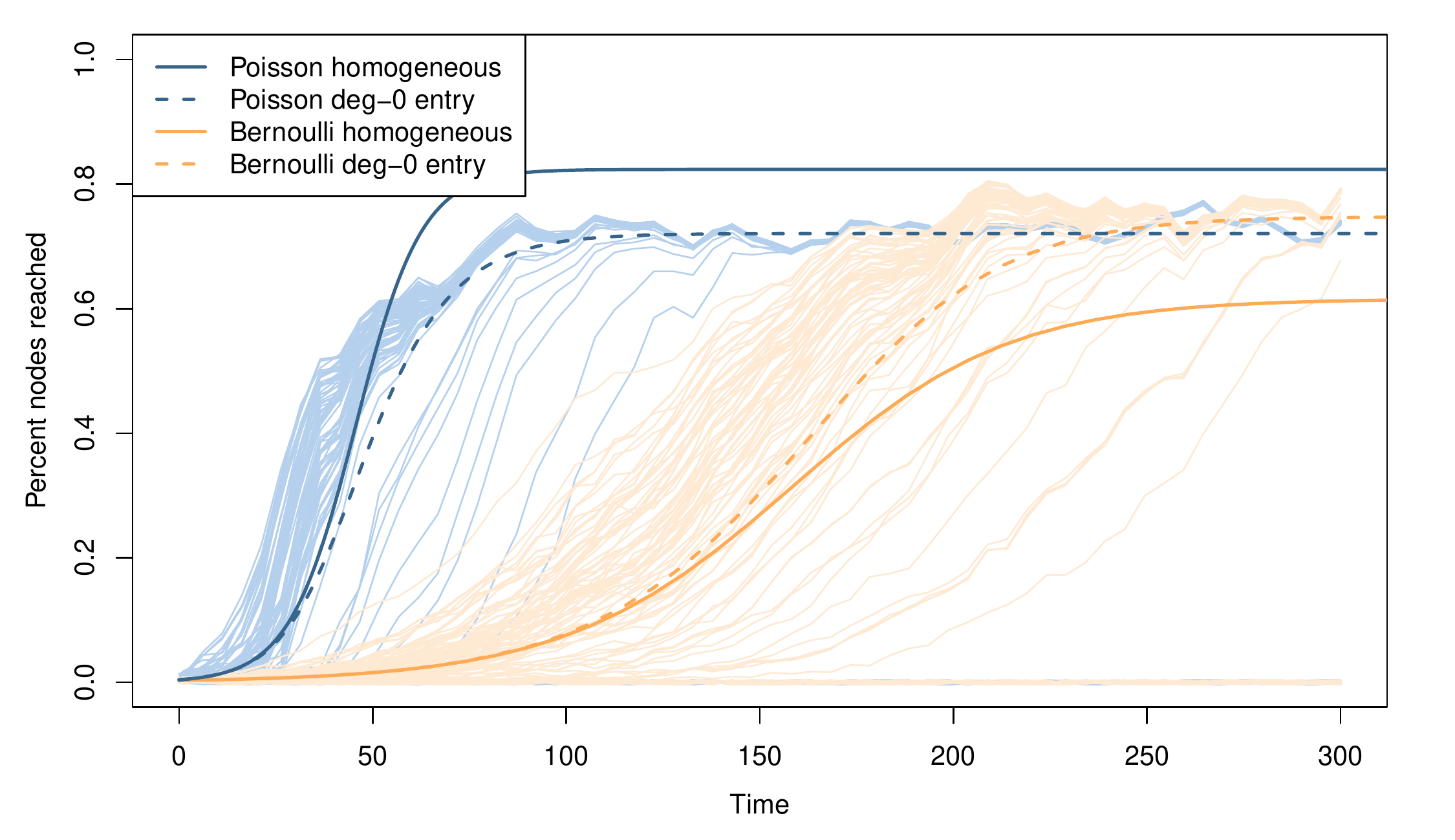}
    \caption{Comparing the persistent FRS size from the homogeneous model (Section~\ref{sec:logistic}) vs the adjusted degree-0 entry model (Appendix \ref{sec:meanfield_adjustment}). For network parameters $k=0.5$, $\alpha=1/6$, $\mu=1/40$, and network size 500 (same simulations as in Figure \ref{fig:traceplot}), the homogeneous model shows higher equilibrium FRS size for Poisson degree networks than Bernoulli. However, in the degree-0 entry model, and in our simulations (thin lines), the equilibrium prevalence for both are nearly the same.}
    \label{fig:traceplotadjust}
\end{figure}

\subsection{Stochastic behavior and extinction}

We can model the FRS extinction as a branching process, a technique
common in epidemiology. Let the population be the nodes in the active FRS. Let the ``offspring'' 
of a node be the non-FRS nodes it contacts, during the time interval when it is in the active FRS. 
Because stochastic extinctions mostly occur within the first few generations, the FRS is small compared to the network, as in the exponential growth phase (Section~\ref{sec:exponential}). So, we can approximate the number of offspring by the cumulative degree of a node, $D_{a}(L)$. This approach is similar to the concept of basic reproduction number $R_0$ in epidemiology \cite{leung_concurrency_2015}. 

Let $G(z)$ be the probability generating
function (PGF) of the offspring distribution $D_a(L)$. Then the probability
of extinction is the smallest non-negative root of the equation \cite{karlin_first_2014}
\begin{equation}
G(z) = z.
\end{equation}

There are two parts to a node's cumulative degree $D_{a}(L)$: the initial neighbors $D_{a}(0)$ at the time it enters the FRS, and the number of formations over its lifetime $D_a(L)-D_a(0)$. The initial degree $D_{a}(0)$ has a distribution specified by the model (either Poisson or Bernoulli). From Section~\ref{sec:aggdeg}
we know that for Poisson degree distributions, the cumulative number of formations
$D_a(L)-D_a(0)$ has a geometric distribution (\ref{eq:DaLPoisson}).

So, the PGF of $D_{a}(L)$ is
\begin{equation}
G(z) = p(1-(1-p)z)^{-1} \exp(k(z-1)),
\end{equation}
where $p = (1+\lambda_{p}/\mu)^{-1}$.
We can solve this equation numerically to find the extinction probability
for different values of the parameters for a Poisson degree network. 
As expected, the extinction probability goes to 1 for 
parameter values below the growth threshold (Section~\ref{sec:keyresults}). We lack the distribution of lifetime number of formations for Bernoulli degree networks for this calculation.

For parameter values
below the threshold, the FRS will go extinct for most initial nodes.
For high values of $k$ and $\alpha/\mu$, the FRS will go to persistence
most of the time. But in a band of parameter values just above the
threshold, the FRS exhibits high variability, where there is large
uncertainty in the equilibrium outcome (Figure~\ref{fig:equilplot}).

\begin{table}[ht]

\begin{tabular}{lcc}
\hline 
 & Poisson & Bernoulli\tabularnewline
\hline 
Edge separation rate & $\sigma=\alpha-2\mu$ & $\sigma=\alpha-2\mu$\\
\noalign{\vspace {.1cm}}
Edge formation rate & $\lambda_{P}=(\sigma+\mu)k$ & $\lambda_{B}=(\sigma+\mu)k/(1-k)$\\
\noalign{\vspace {.1cm}}
Expected lifetime partners & $E[D_{a}(L)]=k\alpha/\mu$ & $E[D_{a}(L)]=k\alpha/\mu$\\
\noalign{\vspace {.1cm}}
Initial FRS growth rate & $g_{P}=\alpha\frac{k-x}{1-k}$ & $g_{B}=\alpha\left(\frac{\sqrt{1+4k(1-k)}-1}{2(1-k)}(1-x)-x\right)$\\
Threshold for growth & $\mu/\alpha<k$ & $\mu/\alpha<1-\frac{2(1-k)}{\sqrt{1+4k(1-k)}+1-2k}$\\
\noalign{\vspace {.1cm}}
Equilibrium prevalence$^{*}$ & $E[A]/n=\frac{k-x}{k(1-x)}$ & $E[A]/n=1-\frac{x(1-kx)}{k(1-x)^{2}}$\\
\noalign{\vspace {.1cm}}
Extinction probability & $z=\frac{p\exp(k(z-1))}{1-(1-p)z}$, solve for $z$ & \\
\hline 
\end{tabular}

\caption{Closed-form expressions for our key results. The expressions depend only on the network parameters $k$ (mean degree), $\alpha$ (edge dissolution rate), and $\mu$ (node exit rate). For simplicity, we defined $x=\mu/\alpha$ and $p=(1+\lambda_{p}/\mu)^{-1}$. The equilibrium prevalence is shown for the base model,
where new nodes are indistinguishable from existing nodes. The prevalence in our simulations, where new nodes enter with degree 0, is derived in Appendix B.}
\label{table:keyresults}
\end{table}

\section{Discussion}
\label{sec:discussion}

We developed analytic expressions and simulation tools for exploring the properties of forward reachable sets in stochastic dynamic networks with open populations.  Our model provides insight into the thresholds, rates of growth, and equilibrium states of the FRS, representing these as emergent features of more basic network properties: degree means and distributions, edge dissolution rates, and node exit rates. All of these are properties that can be measured in sample surveys, which allows our methods to be grounded in empirical data when desired.  We find that the two degree distributions compared here, Bernoulli and Poisson, produce quantitative and qualitative differences in the properties of the FRS when conditioning on all other parameters.  The epidemic threshold is lower and the growth rate is faster in Poisson networks than in Bernoulli networks, for a wide
range of underlying mean degrees and partner acquisition rates.  There is also a broad band of parameter values around the threshold for both network types where the variability in the growth rate and extinction process is maximal.  In this band, the possibility of extinction leads to a bimodal equilibrium FRS size distribution, one that emerges at the beginning of the process, not just over the long term.

A natural application of these findings is in epidemiology, where the FRS represents the maximal epidemic potential in a population. There is a substantial literature on the population dynamics of infectious disease, and a smaller but growing literature on the impact of partnership network structure on these dynamics.  Most of this work relies on deterministic models and mean field approximations for analytic results.  Our work therefore contributes to this literature in several ways.  First, our stochastic analysis describes the variance of the size of the FRS (maximal disease prevalence) on a non-trivial network over time.  Second, we derive analytic solutions for the expected growth rate and growth threshold for this stochastic process.  Third, we are able to derive a mean field approximation for the equilibrium prevalence from our stochastic process of pair formation and dissolution with both vital dynamics and non-trivial network structure; and compare this to stochastic simulations.  And fourth, under these conditions we derive an analytic expression for the extinction probability for the Poisson degree distribution case.  The methods we present are specifically designed to capture the impact of stochastic variability in a systematic way, so that its practical implications can be evaluated. 

There are many different types of infectious diseases; HIV, in particular, motivated our work, and this influenced several aspects of the analysis. We note some of the resulting implications and limitations below.

\subsection{Disease States}
Since nodes do not leave the FRS and nodes cease forming ties after they exit, this model is analogous to an ``SI'' (susceptible-infected) epidemic, and the findings are not relevant for diseases with recovery, either back into susceptible status (SIS) or with immunity (SIR).  Our model is appropriate for HIV, but not for other diseases such as measles or chlamydia.

\subsection{Parameter Values}
We examined a range of parameter values that are both plausible for sexual partnership networks and bound the transitional region for the FRS. 
The mean active lifetime $1/\mu$
is set at 40 years; 35-40 is often used in HIV simulation analyses \cite{baggaley_modelling_2006,hallett_understanding_2008,eaton_why_2014}.
Mean relationship duration $1/\alpha$ ranges from 2.5 to 15 years, and the exponential distribution results in both a large fraction of shorter relationships as well as a right-skewed exponential tail \cite{leung_dynamic_2012}.  Mean cross-sectional degree $k$ ranges from 0.2 to 0.6 partners, similar though a bit lower than that reported in a study of young adults in the U.S. \cite{morris_concurrent_2009}.
Combining these parameters, the average lifetime number of partners ranges from 0.5 to 10, similar to the range reported in many surveys, both in the U.S. and elsewhere \cite{morris_telling_1993,morris_timing_2010,hamilton_consistency_2010}.

\subsection{Degree Distribution Comparison} \label{sec:degdist}
In the context of sexual partnership networks, 
the Bernoulli degree distribution represents a restrictive rule of serial monogamy for partnerships, while the Poisson instead allows individuals to have multiple concurrent partners.  While these distributions do not represent any specific population, they bracket a reasonable range for general heterosexual populations.  Our results provide additional support to the growing literature that finds concurrent partnerships increase both epidemic potential and variability \cite{goodreau_concurrent_2012,morris_concurrent_1997,eaton_concurrent_2011,leung_concurrency_2015}.

There is extensive literature documenting a highly skewed, power-law-like cumulative degree distribution in sexual partnership networks, reviewed in \cite{hamilton2008degree}. However, the degree distributions in our model refer to the number of partners at one moment in time, rather than cumulative over the last year. Although our model would not reproduce a power-law network, we can get closer to a long-tailed distribution by relaxing the homogeneity assumptions in Section \ref{sec:precise}. Specifically if we allow each node to form connections at different rates (but still constant over time), the network will have more nodes with a large cumulative degree, given the same network density. The impact of such heterogeneity on the FRS will be a topic of further investigation.

\subsection{Additional Heterogeneity}
A realistic model would need to account for a wide range of additional demographic details that influence partnership network structure and dynamics:  group-specific mean-degree, dissolution, and mortality rates, and differences in behavior based on disease status. For example, if nodes in the FRS have a reduced partnership formation rate, or an increased mortality rate, then the FRS growth rate will obviously decrease, and the threshold for growth would shift toward larger parameter values for mean degree and dissolution rate.  If these demographic details only affect the dynamic network, our model can account for it by adding more compartments, at the cost of additional equations \cite{ferguson_more_2000}, and the loss of closed-form results.  

Additionally, a memoryless stochastic process (``Markov assumption'') does not account for different types of relationships. In real networks, partnership durations are often bi-modal (long and short term relationships), and partnership durations and type may both depend on the presence of other partnerships for each node. 
As such, our framework seems better suited for gaining broad intuition into the impact of network structure and dynamics on reachability, than for making detailed predictions for specific epidemics. 

\subsection{Partnership status at entry}
In the finite population models and in our simulations, we assume that new nodes enter the network as non-FRS with degree zero (Section~\ref{sec:sim}). As a result, nodes in the FRS have higher mean degree than nodes outside. In the context of disease spread, this would mean that infected individuals will on average have more contacts than the non-infected. The implications are derived in Appendix B.
For Bernoulli networks (i.e., under the assumption of serial monogamy), the implication is that the fraction of singles inside the FRS is lower, especially when the FRS size is large. As a result, a node in the FRS has a higher chance of pairing with an outside node, which leads to a higher rate of discordant contact, and shifts the equilibrium prevalence higher. For Poisson networks, the implications is that the mean degree is lower outside the FRS, leading to smaller components outside the FRS, which reduces the FRS growth rate and shifts the equilibrium prevalence lower. This produces the outcome seen in Figure~\ref{fig:traceplotadjust}: a higher expected equilibrium FRS size for the Bernoulli networks.  If we instead let the new nodes have the same mean degree as the existing ones, the rankings are reversed, and Poisson networks have the higher equilibrium FRS size (as we derive analytically in Lemma~\ref{aPaB}).  We were a bit surprised that this seemingly innocuous assumption would have such an impact on a key outcome of interest.  It suggests that both analytic and simulation studies of epidemic dynamics should pay attention to this assumption in order to avoid inappropriate artifacts.  

\subsection{Reachability vs. Epidemic Spread}
Reachability implicitly defines ``transmission'' as incorporation into the FRS --- it is instantaneous with probability 1 on first contact between discordant nodes (i.e., an infinite transmission rate). This represents the theoretical maximum any transmission process could reach on a specific network. A model for the spread of any specific pathogen would require finite, possibly heterogeneous and possibly time-varying transmission rates across discordant pairs. For example, a model for HIV transmission would need to capture variation in transmission probability by stage of infection, demographic subgroup, treatment status, and the use of prophylaxis. 

Some of the implications of a finite transmission rate are obvious: it would increase the probability of epidemic extinction and the parameter values needed to cross the persistence threshold, and would reduce the growth rate and equilibrium prevalence. There is no reason to expect that the qualitative results for the Bernoulli vs. Poisson comparison would be changed --- a Poisson network will always offer multiple pathways for transmission, and larger connected components for any fixed parameter value set --- but the range of parameter values in which this difference matters will change to reflect the new region of epidemic persistence.

Deriving analytic results for stochastic models with a finite transmission rate, however, will be more complicated.  The epidemic growth rate is proportional to the number of discordant edges, and one would now need to keep track of these over time until transmission actually occurs. Deterministic models that keep track of discordant edges, like the pairwise moment closure model \cite{eames_modeling_2002}, pair-formation model \cite{kretzschmar_effect_1998}, and edge-based compartmental model \cite{miller_edge-based_2011}, offer analytic approaches and some solutions for the growth rate, threshold, and final prevalence of an epidemic.  However, these models often assume that the network is static, that the degree distribution is binomial (equivalent to our Poisson model for large networks) or regular (everyone has the same number of partners), that triad closure is negligible, and all of them ignore stochasticity.  Triad closure is clearly important for finite transmission rates since any node can only be infected once in an SI process, and we have already discussed the importance of the other factors.

\subsection{Stochasticity of Epidemic Outcome}

Most previous papers on transmission processes did not provide any insight into the stochastic variability of epidemic outcomes, or did so for a restrictive model (for example, \cite{dangerfield_integrating_2009} calculated the variance of the outcome on a static regular network).  When infections (like HIV) have a slow transmission rate or networks (like sexual partnership networks) have slow link turnover, then the stochasticity of the initial epidemic growth will be important, and the epidemic outcome cannot be cleanly partitioned by the growth threshold. For parameters above the threshold, there is still a chance that an epidemic can go extinct. Conversely, when the parameters are below the threshold, there is a chance that an epidemic can persist for a long time. 

Our findings suggest that this stochastic variability is important, especially in the band around the persistence/growth threshold. The practical implications are that one may find very different dynamics and outcomes in populations with nearly identical network structure and dynamics. This variability makes it more difficult to infer epidemic potential from observed epidemic outcomes alone, and more difficult to predict epidemic outcomes going forward.  Quantifying this uncertainty, in the context of a finite transmission rate, is an important subject for further investigation.

\appendix
\section{Variance analytics}
We now describe how to determine ODEs for variances, $\Var[X]'$.
Note that $\Var[X]=\E[X^2]-\E[X]^2$ and thus,
$\Var[X]'=\E[X^2]'-2\E[X]\E[X]'$.
More generally, for a vector-valued process,
\begin{align}
\Var[\vect{V}]&=\E[\vect{V}\vect{V}\transpose]-\E[\vect{V}]\E[\vect{V}]\transpose,\\
\Var[\vect{V}]'&=\E[\vect{V}\vect{V}\transpose]'-\E[\vect{V}]'\E[\vect{V}]\transpose
-\E[\vect{V}]{\E[\vect{V}]'}\transpose.
\end{align}
From section~\ref{sec:exponential} we already have equations for $\E[\vect{V}]'$.
We then use the approach in section~\ref{sec:generalODE} to derive equations for $\E[\vect{V}\vect{V}\transpose]'$.
Recall, for jumps of type $i$, $\vect{v}\to\vect{v}+\Delta\vect{V}_i$ at rate $\vect{h}_i\transpose \vect{v}$.  Thus,
\begin{align}
\E[\vect{V}\vect{V}\transpose]' &= \sum_i \E[((\vect{V}+\Delta\vect{V}_i)(\vect{V}+\Delta\vect{V}_i)\transpose-\vect{V}\vect{V}\transpose)\vect{h}_i\transpose\vect{V}]\\
&= \sum_i \E[(\Delta \vect{V}_i\vect{V}\transpose+\vect{V}\Delta\vect{V}_i\transpose+\Delta\vect{V}_i\Delta\vect{V}_i\transpose)\vect{h}_i\transpose\vect{V}].\label{eq:DVV}
\end{align}
This (\eqref{eq:mastereq} and \eqref{eq:DVV}) is still a set of linear differential equations. Specifically, component $(j,k)$ of \eqref{eq:DVV} is a linear combination of an extended set of state variables $\bar{\vect{v}}=(\E[\vect{V}],\E[\vect{V}\vect{V}\transpose])$,
\begin{equation}
\E[\vect{V}\vect{V}\transpose]'_{jk}= 
\sum_i \E[([\Delta \vect{V}_i]_j\vect{V}_k+\vect{V}_j[\Delta\vect{V}_i]_k+[\Delta\vect{V}_i]_j[\Delta\vect{V}_i]_k)\vect{h}_i\transpose\vect{V}].
\end{equation}

\subsection{Poisson case}
Applying this framework to the transitions in the Poisson case (see section~\ref{sec:PoissonEqs}), we derive the following equations:
\begin{align}
\E[A^2]' &= \E[\lambda_P A ((A+C)^2-A^2)]+\E[\mu A((A-1)^2-A^2)]\\
&= 2(\lambda_P \E[C]-\mu)\E[A^2]+(\lambda_P \E[C^2]+\mu)\E[A],
\end{align}
where as before we ignore $R$ since it does not influence the other variables.  Together with the original equation for $\E[A]'$, this is a linear system of differential equations with eigenvalues $g_P$ and $2g_P$.  This reasonably suggests that the standard deviation of $A$ grows at the same rate as the expectation.  Specifically, we have the closed-form solutions
\begin{align}
\E[A^2(t)] &= e^{2g_P t} (\E[A(0)]\E[C^2]\lambda_P+\mu)/g_P \notag\\
  &\qquad+ \E[A^2(0)])-\E[A](\E[C^2]\lambda_P+\mu)/g_P\\
\Var[A(t)] &= e^{2g_P t} (\E[A(0)]\E[C^2]\lambda_P+\mu)/g_P \notag\\
  &\qquad+ \Var[A (0)])-\E[A](\E[C^2]\lambda_P+\mu)/g_P
\end{align}

If we add in the equations relating to $R$,
\begin{align}
\E[AR]' &= \lambda_P \E[A((A+C)R-AR)]+\mu\E[A((A-1)(R+1)-AR)]\\
&=-\mu\E[A]+\mu\E[A^2]+g_P\E[AR]\\
\E[R^2]' &= \mu\E[A((R+1)^2-R^2)]\\
&=\mu(2\E[AR]+\E[A]),
\end{align}
then the largest eigenvalue of the system for $[A,R,A^2,AR,R^2]$ remains $g_P$ as before.

\subsection{Bernoulli case}
We now apply the framework to the transitions for the Bernoulli case (see section~\ref{sec:BernoulliEqs} for a list of the transitions).  We again ignore $R$ as it does not affect the other variables, and derive the following equations:
\begin{align}
\E[A^2]'&=\E[W(2A+1)]\lambda_B+\mu\E[A(-2A+1)]\\
&=\mu\E[A]+\lambda_B\E[W]-2\mu\E[A^2]+2\lambda_B\E[AW]\\
\E[AW]'&=\mu\E[W((A-1)(W-1)-AW)]\notag\\
&\qquad	 +\mu\E[(A-W)((A-1)W-AW)]\notag\\
&\qquad 	+\lambda_B\E[W((A+1)(W-1)-AW)]\notag \\
&\qquad	 +(\sigma/2)\E[(A-W)(A(W+2)-AW)]\\
&= \mu\E[W(-A-W+1)]+\mu\E[(A-W)(-W)]\notag\\
&\qquad 	+\lambda_B\E[W(-A+W-1)]\notag \\
&\qquad 	+(\sigma/2)\E[(A-W)(2A)]\\
&= (\mu-\lambda_B)\E[W]+\sigma\E[A^2]\notag\\
&\qquad 	+(-2\mu-\lambda_B-\sigma)\E[AW]+\lambda_B\E[W^2]\\
E[W^2]'&=(\lambda_B+\mu)\E[W(-2W+1)]+(\sigma/2)\E[(A-W)(4W+4)]\\
&=2\sigma\E[A]+(\lambda_B+\mu-2\sigma)\E[W]\notag\\
&\qquad 	+2\sigma\E[AW]-2(\lambda_B+\mu+\sigma)\E[W^2].
\end{align}
The largest eigenvalue of a 3x3 system involves the solution to a cubic equation and is thus a mess.  Nevertheless, we conjecture that the largest eigenvalue here is less than $2 g_B$ based on numerical experiments.

For completeness, here are the equations involving $R$ (the equation for $\E[R^2]$ is omitted as it is the same as for the Poisson case):
\begin{align}
\E[AR]'&=\lambda_B\E[W((A+1)R-AR)]+\mu\E[A((A-1)(R+1)-AR)]\\
& =-\mu\E[A]+\mu\E[A^2]+\lambda_B\E[WR]-\mu\E[AR]\\
\E[WR]'&=\lambda_B\E[W((W-1)R-WR)]\notag\\
&\qquad 	+\mu\E[W((W-1)(R+1)-WR)]\notag\\
&\qquad 	+(\sigma/2)\E[(A-W)((W+2)R-WR)]\\
& =-\lambda_B\E[WR]+\mu\E[W(W-R-1)]+(\sigma/2)\E[(A-W)2R]\\
& =-\mu\E[W]+\mu\E[W^2]+\sigma\E[AR]-(\lambda_B+\mu+\sigma)\E[WR].
\end{align}

\section{Adjustments to the Mean field Approximation}
\label{sec:meanfield_adjustment}

In our simulations, nodes enter the network with degree-0, which causes
the network to have a non-homogeneous degree distribution. Leung et al. \shortcite{leung_dynamic_2012} showed that the effect on the network is negligible if the partnership dissolution rate is fast compared to the lifetime of each individual node.  However we show that there is still
a significant effect on the equilibrium size of the FRS. The equilibrium FRS size
increases for Bernoulli networks, and decreases for Poisson networks,
switching their relative positions from our homogeneous model (Figure \ref{fig:traceplotadjust}). We make adjustments to
our model and derive the FRS outcome below.

\subsection{Bernoulli degree networks}

In a homogeneous Bernoulli degree network, the probability for
a degree-0 node (``single'') in the FRS to connect with a single
not in the FRS is $(1-E[A]/n)$, where $A$ is the active FRS size,
and $n$ is the network size (Equation \eqref{eq:BmeanfieldA}). If nodes enter the network
as a single (and non-FRS), the percent of singles is higher for the
non-FRS group, compared to the FRS group. So, a single in the FRS
has a higher chance of finding an eligible partner not in the FRS
than in a homogeneous network, and we get a larger equilibrium prevalence
than expected. This effect is negligible when the FRS size is small
compared to the network size, so it will not affect the exponential
growth phase.

We need to keep track of the number of singles not in the FRS ($V$),
and use the conditional probability in the ``Single'' column in
Table \ref{table:2way}, instead of the marginal probability.

\begin{table}
\caption{Two-way table of relationship status vs FRS status}
\begin{tabular}{cccc}
\hline 
 & ~In FRS~ & ~Not in FRS~ & \tabularnewline
\hline 
``Single'' & $W$ & $V$ & $W+V$\tabularnewline
``Paired'' & $A-W$ & $n-A-V$ & $n-W-V$\tabularnewline
\hline 
 & $A$ & $n-A$ & $n$\tabularnewline
\hline 
\end{tabular}
\label{table:2way}
\end{table}

We add $V$ to the state variables of our Markov
process, along with the new transition types in Table \ref{table:adjbernrates}. Note that all deaths in
our simulations are automatically replaced with a new birth as $V$,
to keep the network size constant.

\begin{table}
\caption{New transition types for Bernoulli degree networks}
\begin{tabular}{lcc}
\hline 
Transition & Rate & Change in state (A,W,V)\tabularnewline
\hline 
WV formation (infection) & $W\lambda[V/(V+W)]$ & $(1,-1,-1)$\tabularnewline
WW formation & $W\lambda[W/(V+W)]/2$ & $(0,-2,0)$\tabularnewline
VV formation & $V\lambda[V/(V+W)]/2$ & $(0,0,-2)$\tabularnewline
WW separation & $(A-W)\sigma/2$ & $(0,2,0)$\tabularnewline
VV separation & $(n-A-V)\sigma/2$ & $(0,0,2)$\tabularnewline
W death & $\mu W$ & $(-1,-1,1)$\tabularnewline
V death & $\mu V$ & $(0,0,0)$\tabularnewline
FRS paired death & $\mu(A-W)$ & $(-1,1,1)$\tabularnewline
Non-FRS paired death & $\mu(n-A-V)$ & $(0,0,2)$\tabularnewline
\hline 
\end{tabular}
\label{table:adjbernrates}
\end{table}

We also need an additional differential equation to track changes
in $V$. The modified equations are:
\begin{align*}
\E[A]' & =  \E[W\lambda V/(V+W)-\mu A]\\
\E[W]' & =  \E[-W(\lambda+\mu)+(\sigma+\mu)(A-W)]\\
\E[V]' & =  \E[-\lambda V+(n-A-V)(\sigma+\mu)+\mu n]
\end{align*}
and the initial conditions are $\E[A]=1+k$, $\E[W]=1-k$, $\E[V]=n(1-k)$.
Replace the state variables $A,W,V$ on the right hand side with their expectations to obtain
the deterministic mean-field equations. These equations can be solved
numerically.

\subsection{Poisson degree networks}

In a Poisson degree network, nodes connect to any other node
in the network with equal probability, so we do not have the same
situation as above. However, the mean degree in a network with degree-0
births is different for nodes in the FRS versus those not in the FRS.
Since nodes enter the network as degree-0 and non-FRS, the mean degree
for non-FRS nodes will be lower than for the overall network. For a Poisson
distributed network, lower mean degree leads to lower mean component
size, so we need to adjust Equation \eqref{eq:pois} to the following,
\begin{equation*}
\E[A]' =  (\lambda_{P}\E[C|\mbox{nonFRS}]-\mu)\E[A].
\end{equation*}
The logistic mean-field equation \eqref{eq:PmeanfieldA} can be adjusted in the same way.
We can obtain the mean component size for non-FRS nodes, $\E[C|\mbox{nonFRS}]$,
via simulation, and solve the differential equation as before. Since
this adjustment lowers the number of nodes added to the FRS in a formation
event, it will lower the equilibrium prevalence for a Poisson degree
network.

\bibliography{FRS}

\end{document}